\documentclass{article}

\usepackage{microtype}
\usepackage{graphicx}
\usepackage{subcaption}
\usepackage{booktabs} 
\usepackage{hyperref}
\usepackage{tabularx}
\usepackage{multirow}
\usepackage{graphicx}

\makeatletter
\newcommand{\algolabel}[1]{%
  \addtocounter{ALC@line}{-1}
  \refstepcounter{ALC@line}
  \label{#1}
}
\makeatother

\usepackage[accepted]{icml2026}

\usepackage{amsmath}
\usepackage{amssymb}
\usepackage{mathtools}
\usepackage{amsthm}

\usepackage[capitalize,noabbrev]{cleveref}

\theoremstyle{plain}
\newtheorem{theorem}{Theorem}[section]

\theoremstyle{definition}
\newtheorem{definition}[theorem]{Definition}

\theoremstyle{remark}

\usepackage{listings}
\usepackage[table]{xcolor}

\newcommand{\Pb}{\mathbb{P}}

\usepackage[textsize=tiny]{todonotes}
\usepackage{enumitem}
\usepackage{adjustbox}
\usepackage{makecell}
\usepackage{booktabs}
\usepackage{tcolorbox}
\tcbuselibrary{listingsutf8}
\tcbuselibrary{breakable}

\icmltitlerunning{Differentially Private Preference Data Synthesis for Large Language Model Alignment}

\begin{document}

\twocolumn[
  \icmltitle{Differentially Private Preference Data Synthesis for\\ Large Language Model Alignment}



  \icmlsetsymbol{equal}{*}

  \begin{icmlauthorlist}
    \icmlauthor{Fengyu Gao}{sch1}
    \icmlauthor{Jing Yang}{sch1,sch2}

  \end{icmlauthorlist}
  \icmlaffiliation{sch1}{Department of Computer Science, University of Virginia, Charlottesville, Virginia, USA}
  \icmlaffiliation{sch2}{Department of Electrical and Computer Engineering, University of Virginia, Charlottesville, Virginia, USA}

  \icmlcorrespondingauthor{Jing Yang}{yangjing@virginia.edu}

  \icmlkeywords{Machine Learning, ICML}

  \vskip 0.3in
]



\printAffiliationsAndNotice{}  

\begin{abstract}
  
Preference alignment is a crucial post-training step for large language models (LLMs) to ensure their outputs align with human values. However, post-training on real human preference data raises privacy concerns, as these datasets often contain sensitive user prompts and human judgments. To address this, we propose {\bf DPPrefSyn}, a novel algorithm for generating differentially private (DP) synthetic preference data to enable privacy-preserving preference alignment. DPPrefSyn is a principled framework grounded in the Bradley–Terry preference model and the intrinsic geometric structure of pairwise human preference data. It first learns an underlying preference model from private data with formal differential privacy guarantees, and then leverages the learned model together with public prompts to synthesize high-quality preference data. It exploits the shared linear structure of per-cluster reward models to effectively capture heterogeneous human preferences in private datasets, and leverages DP Principal Component Analysis (DP-PCA) to improve learning accuracy. Extensive experimental results demonstrate that DPPrefSyn achieves competitive alignment performance under strong DP guarantees. These findings highlight the potential of synthetic preference data as a practical alternative for privacy-preserving preference alignment across a broad range of applications. To the best of our knowledge, this is the first work to generate DP synthetic preference data for LLM alignment. Our code is available at \href{https://github.com/gfengyu/Differentially-Private-Preference-Data-Synthesis}{https://github.com/gfengyu/Differentially-Private-Preference-Data-Synthesis}.

\end{abstract}

\section{Introduction}

Preference alignment techniques, such as RLHF \citep{stiennon2020learning} and DPO \citep{rafailov2024direct}, are widely used to align LLMs with human expectations. These algorithms rely on pairwise preference data, where human annotators compare two responses and select the one that better addresses the given prompt. Such preference data is then used to fine-tune LLMs by encouraging the model to rank preferred responses over less preferred ones \citep{ziegler2019fine,ouyang2022training,bai2022training}. The effectiveness and reliability of preference alignment have motivated its adoption in applications such as chat assistants \citep{achiam2023gpt}, mathematical reasoning tools \citep{shao2024deepseekmath}, and code generators \citep{shen2023pangu}.

However, aligning LLMs with human preference data raises substantial privacy concerns, since these datasets frequently include real user prompts and human feedback. These prompts may disclose personal information related to health, identity, or other sensitive topics, and human feedback may reveal private beliefs, preferences, or behavioral patterns \citep{li2023multi,yu2024privacy}. Existing works \citep{chowdhury2024differentially,yu2024privacy,wu2023privately} have explored approaches to mitigate these risks using the rigorous privacy safeguards provided by differential privacy (DP) \citep{dwork2006calibrating} and shown encouraging results. However, many of these approaches protect only part of the data, for example, by privatizing either the user prompts \citep{yu2024privacy} or the preference labels \citep{chowdhury2024differentially,zhang2025towards}, but not both. Such partial protection leaves potential privacy gaps, as adversaries could still infer sensitive attributes from unprotected prompts or preference annotations. Besides, such works usually focus on specific post-training algorithms, such as RLHF \citep{wu2023privately}, and are not compatible with newer methods like DPO. Moreover, these methods are typically constrained by limited private preference datasets, and often fall short in achieving high-quality preference alignment, due to the high cost and scalability challenges associated with collecting human preference annotations.
These gaps lead to the central question of this work:
\begin{center}
\textit{Is it possible to achieve superior preference alignment while ensuring thorough privacy protection and maintaining compatibility with emerging alignment methodologies?}
\end{center}

In this work, we provide an affirmative answer to this question. Our main contributions are summarized as follows:

\begin{itemize}[topsep=0pt, left=0pt]
\item We introduce a new algorithm DPPrefSyn (\Cref{fig:main}) to generate DP synthetic preference data for LLM alignment. DPPrefSyn is a principled framework grounded in the Bradley–Terry preference model and the intrinsic geometric structure of pairwise human preference data. It features the following design elements: 1) {\it Capturing diverse human preferences via shared linear structure of rewards and clustering.} 
The Bradley–Terry preference model, together with a linear reward structure, enables us to group preference samples based on embedding differences between concatenated prompt–preferred and prompt–dispreferred pairs, and to approximate preferences within each cluster using a shared reward function. 2) {\it Improving learning efficiency through DP-PCA.} To address sample inefficiency and clustering instability in high-dimensional spaces, we apply DP-PCA to project the embeddings into a lower-dimensional space while preserving key preference signals under DP. 3) {\it Saving privacy budgets by using public prompts.} We use public prompts to generate candidate responses and apply private reward models to select preference pairs. This allows us to focus the privacy budget entirely on modeling private preferences, improving privacy efficiency.

\item We conduct a rigorous privacy analysis of DPPrefSyn and confirm it follows $(\varepsilon,\delta)$-DP. In our analysis, we first allocate a privacy budget $\varepsilon_0$ to DP-PCA \citep{amin2019differentially} for dimensionality reduction, and $\varepsilon_1$ to DP-KMeans \citep{su2016differentially} for clustering. The remaining budget is used to train per-cluster reward models with DP-SGD. We apply the Privacy Random Variable (PRV) accountant \citep{gopi2021numerical} to tightly compose the DP-SGD privacy cost, ensuring that DPPrefSyn satisfies $(\varepsilon,\delta)$-DP. Thanks to the post-processing property of DP, the resulting DP synthetic dataset can be reused across multiple preference alignment methods and different LLMs.

\item We empirically evaluate DPPrefSyn on standard benchmarks, including question answering tasks from OpenAssistant \citep{kopf2023openassistant} and Anthropic-HH \citep{bai2022training}, as well as the TL;DR summarization task \citep{stiennon2020learning}. Our experiments suggest that synthetic preference data generated by DPPrefSyn offers strong privacy guarantees while achieving competitive utility across diverse tasks. For example, with $\varepsilon=2$, fine-tuning the Pythia-2.8B model using DPO on our synthetic data achieves a GPT-4o win rate of $56.48\%$ on Anthropic-HH, significantly outperforming fine-tuning on real data ($37.02\%$). These results show that our DP synthetic preference data can enable effective preference alignment of LLMs with strong privacy guarantees. Moreover, DPPrefSyn can be applied to various preference optimization methods (e.g., DPO and RLHF) and remains effective under strong privacy constraints, overcoming the limitations of other privacy-preserving post-training approaches that are tailored to specific methods.

\end{itemize}

\begin{figure*}[t]
\centering
\includegraphics[width=\textwidth]{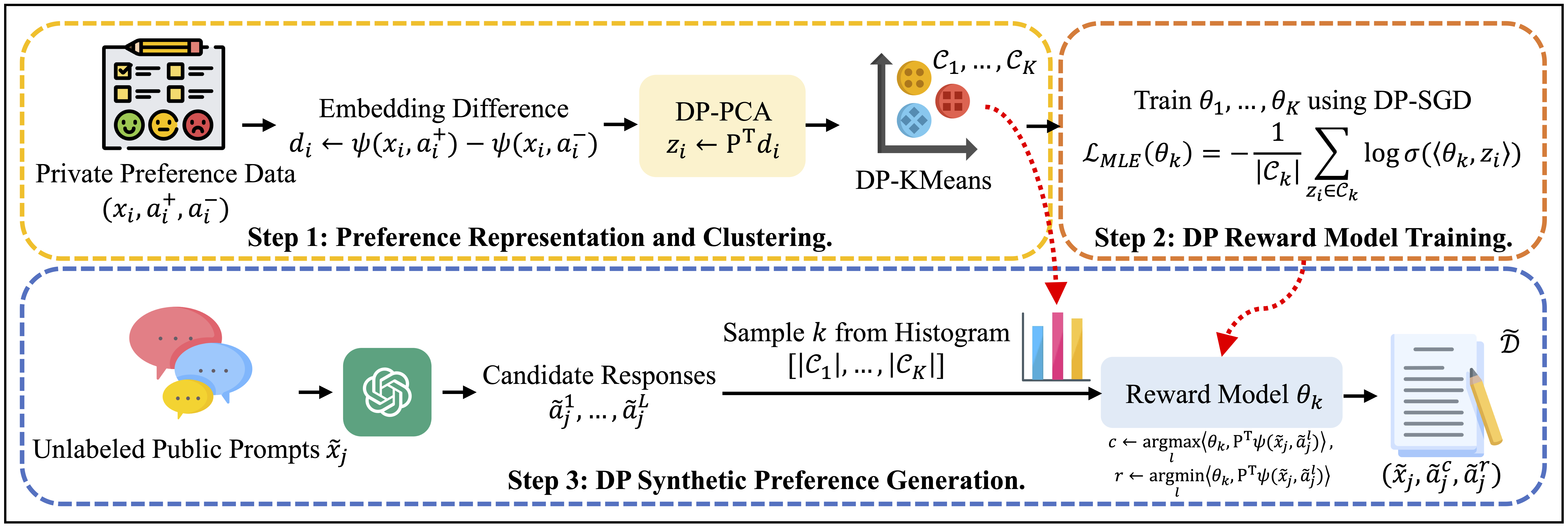}
    \caption{\small Overview of DPPrefSyn. DPPrefSyn generates DP synthetic preference data through 3 steps: 1) representing preference samples as embedding differences and clustering them via DP-PCA and DP-KMeans, 2) training DP reward models on each cluster using DP-SGD, and 3) generating synthetic preference samples from public prompts guided by the DP-protected distribution of reward functions.}
    \vspace{-0.1in}
    \label{fig:main} 
\end{figure*}

\section{Related Work}\label{sec:related_work}

{\bf Differentially Private Alignment of LLMs.} \citet{wu2023privately} introduce a DP framework to align LLMs with reinforcement learning by adapting the PPO algorithm to the DP setting. However, their approach is {\it limited to PPO}, while preference alignment continues to evolve with new methods. \citet{chowdhury2024differentially} study the problem of reward estimation from preference-based feedback, using the notion of label-DP \citep{ghazi2021deep} to protect the privacy of human annotators. \citet{zhang2025towards} propose AUP-RLHF, a user-level label-DP framework for RLHF. While both methods are effective at safeguarding preference labels, they {\it do not address the sensitivity of prompts or responses themselves}. Besides, such methods are usually {\it constrained by the limited private user preference data}, and may not achieve high-quality preference alignment, as human preference annotations are costly to obtain. \citet{yu2024privacy} focus on protecting sensitive user {\it instructions} by generating DP  synthetic {\it instructions} to replace real ones during data annotation and model fine-tuning.

Beyond preference alignment, several studies focus on private fine-tuning of LLMs using general labeled data, rather than preference comparisons \citep{yu2021differentially,li2021large,chen2024privacy,tang2024private,zhang2023dpzero,he2022exploring}, and some recent work focuses on protecting the privacy of in-context prompts for in-context learning \citep{duan2023flocks,tang2023privacy,gao2024data,hong2023dp,wu2023privacy}.

{\bf Differentially Private Synthetic Text Generation.} Our work falls within the broader scope of DP synthetic text generation. In this area, two representative lines of work have emerged. The first line fine-tunes a language model on private data under DP, and then uses the fine-tuned model to generate synthetic text \citep{yue2023synthetic,mattern2022differentially,mireshghallah2023privacy,carranza2023privacy,yu2024privacy,wang2024knowledgesg,ochs2024private,tan2025synthesizing,carranza2024synthetic}. For example, \citet{yu2024privacy} privately fine-tune the LLaMA model on private instruction data, and then use the resulting model to generate synthetic text for supervised instruction tuning.

Since DP fine-tuning can be expensive and sometimes infeasible (especially for non-public language models), a recent line of work explores DP synthetic data generation by prompting LLM APIs \citep{lin2023differentially,xie2024differentially,wu2024prompt,hou2024pre}. These approaches typically generate synthetic samples using LLM API access and then select outputs that are similar to the private data in a privacy-preserving manner. For instance, \citet{xie2024differentially} propose the Aug-PE algorithm, which iteratively refines LLM-generated samples based on embedding-space similarity to private data.
{\it Our work aligns with this line of research and focuses on preference alignment for LLMs. }

{\bf Diversity in Human Preferences.} Recent works highlight that human preferences are inherently diverse \citep{denton2021whose,aroyo2023reasonable,aroyo2023dices,chakraborty2024maxmin}. The key factors include sociodemographic differences (e.g., race, gender, age), personal beliefs and biases, varying levels of domain expertise, and the inherent ambiguity of natural language \citep{sandri2023don,vogels2021state}. Motivated by this, recent studies have explored personalizing preference alignment to better reflect the values of different user groups \citep{lee2024test,poddar2024personalizing,singh2025fspo}. 
For example, \citet{singh2025fspo} propose a meta-learning framework where an LLM adapts to individual users using a small number of preference examples. In this work, we generate DP synthetic preference data that {\it captures the underlying diversity of human feedback}, leading to improved alignment performance compared to approaches that ignore such diversity (as shown in \Cref{sec:validate_DPPrefSyn}).

\section{Problem Definition}\label{sec:formulation}

{\bf Preference Alignment of LLMs.} 
We consider preference alignment of an LLM based on pairwise human feedback, using 
a preference dataset $\mathcal{D} = \{(x_i, a_i^+, a_i^-)\}_{i=1}^n$. Each sample consists of a prompt $x_i$ 
and two responses $a_i^+, a_i^-$, where $a_i^+$ is preferred over $a_i^-$ by human annotators.

Following prior works on reinforcement learning from human feedback (RLHF) and direct preference optimization (DPO)~\citep{ouyang2022training,zhu2023principled,rafailov2024direct,liu2023statistical}, we model human preferences using the Bradley–Terry model \citep{bradley1952rank}:
\begin{equation*}
    \begin{aligned}
        \mathbb{P}[a_i^+\succ a_i^- \mid x_i] &= \frac{\exp(r^\star(x_i, a_i^+))}{\exp(r^\star(x_i, a_i^+)) + \exp(r^\star(x_i, a_i^-))} \\
        &= \sigma(r^\star(x_i, a_i^+)-r^\star(x_i, a_i^-)),
    \end{aligned}
\end{equation*}
where $a_i^+\succ a_i^-$ means $a_i^+$ is preferred to $a_i^-$, $r^\star (x,a)$ is a latent reward function, and $\sigma(z)=1/(1+e^{-z})$ is the sigmoid function. 

This formulation underlies both RLHF, which explicitly learns a reward model from preference data, and DPO, which induces an implicit reward through the log-ratio between the learned policy and a reference policy. 

{\bf Differentially Private Preference {Data Synthesis}.} We aim to protect the privacy of the preference dataset $\mathcal{D}_{\text {priv}}= \{(x_i, a_i^+, a_i^-)\}_{i=1}^n$ against an adversary who attempts to access or infer private information about individual prompts, responses, or preference labels based on outputs of the fine-tuned LLM. To achieve this, we propose to generate a DP synthetic preference dataset $\tilde{\mathcal{D}} = \{(\tilde{x}_j, \tilde{a}_j^+, \tilde{a}_j^-)\}_{j=1}^m$ that preserves critical preference information in $\mathcal{D}_{\text {priv}}$. 
Thanks to the post-processing property of DP, $\tilde{\mathcal{D}}$ can be reused across various preference alignment methods and LLMs without incurring additional privacy cost.

We formally define $(\varepsilon,\delta)$-differential privacy as follows.
\begin{definition}[$(\varepsilon,\delta)$-Differential Privacy \citep{dwork2006our}]
    A randomized algorithm $\mathcal{A}$ is $(\varepsilon,\delta)$-differentially private if for any two neighboring inputs $\mathcal{D}$ and $\mathcal{D}^\prime$ that differ by a single entry and any set $\mathcal{S}$ of possible outputs: $\Pb[\mathcal{A}(\mathcal{D}) \in \mathcal{S}] \leq e^\varepsilon \Pb[\mathcal{A}(\mathcal{D}^\prime) \in \mathcal{S}] + \delta$.
\end{definition}

{\bf Diverse User Preferences and Linear Reward Structure.} 
Human preferences in the private dataset are usually not {homogeneous} \citep{chakraborty2024maxmin,bakker2022fine,kovavc2023large,jang2023personalized,rame2023rewarded,ji2023beavertails}. Different annotators may prioritize different aspects of a response, such as factual accuracy, politeness, creativity, or clarity, resulting in diverse and sometimes conflicting signals in the data.
To synthesize $\tilde{\mathcal{D}}$, it is thus critical to capture the diverse human preferences in $\mathcal{D}_{\text {priv}}$. 

To facilitate this, we assume a linear structure for the latent reward function~\citep{saha2023dueling,kong2022provably,zhu2023principled}. 
Specifically, we 
adopt a linear reward model $$r_\theta (x,a)=\langle\theta, \phi(x,a)\rangle,$$ where $\phi(x,a)\in\mathbb{R}^d$ is the {\it shared} feature vector and $\theta \in \mathbb{R}^d$ is a {\it user-dependent} parameter. 

The linear structure offers a balance between {\it expressiveness} and {\it tractability} for modeling diverse user preferences. While individual users may exhibit heterogeneous preferences, their reward functions often share common semantic features in a learned embedding space (i.e., $\phi(x,a)$), differing primarily in how these features are weighted (i.e., $\theta$). Such a linear parameterization captures this shared structure while allowing variability across users or preference clusters, enabling effective modeling of preference heterogeneity with limited data and strong privacy constraints. As demonstrated by our experimental results in \Cref{appx:linear_model}, the linear reward structure effectively captures preference signals and achieves competitive performance compared to a fully fine-tuned GPT-2 reward model, despite its simplicity.

\section{Proposed Method}\label{sec:proposed_method}

\subsection{Key Components of the Algorithm Design}

Before detailing our design, we highlight several key components of the algorithm that address the major challenges in synthesizing differentially private preference data.
\if{0}
First, human preferences in the private dataset are not {homogeneous} \citep{chakraborty2024maxmin,bakker2022fine,kovavc2023large,jang2023personalized,rame2023rewarded,ji2023beavertails}. Different annotators may prioritize different aspects of a response, such as factual accuracy, politeness, creativity, or clarity, resulting in diverse and sometimes conflicting signals in the data. Ignoring this diversity in the underlying preferences can result in synthetic data that fails to reflect the true distribution of $\mathcal{D}_{\text{priv}}$. 
\fi

\textbf{Leveraging Linear Reward Structure for Preference Clustering.} 
The linear reward model indicates that
\[
\mathbb{P}[a_i^+\succ a_i^- \mid x_i] = \sigma(\langle\theta,\phi(x_i, a_i^+) - \phi(x_i, a_i^-)\rangle).
\]
Therefore, for a tuple $(x_i,a_i^+,a_i^-)$, $a_i^+\succ a_i^-$ implies that $\langle\theta,\phi(x_i, a_i^+) - \phi(x_i, a_i^-)\rangle\gg 0$, i.e., the preference vector $\theta$ is strongly aligned with $\phi(x_i,a_i^+)-\phi(x_i,a_i^-)$. For human annotators with similar preferences, the feature difference vectors $\phi(x_i,a_i^+)-\phi(x_i,a_i^-)$ across their rated pairs are therefore expected to point in similar directions in the embedding space. Motivated by this intuition, we cluster the user preference samples based on the feature difference vectors $\phi(x_i,a_i^+)-\phi(x_i,a_i^-)$, such that user preferences within the same cluster can be well approximated by a shared reward function parameterized by a common preference vector $\theta$. 
In \Cref{appx:cluster}, we show that the clusters constructed using this approach are meaningful and interpretable in terms of user preference styles.

\textbf{Adopting DP-PCA to Obtain Effective Features.} One natural choice for the feature vector $\phi(x,a)$ is the embedding of the corresponding (prompt, response) pair, as embeddings encode semantic similarities and differences that directly underlie human preference judgments. However, the embeddings are usually high-dimensional (e.g., 1024 dimensions) in order to capture rich semantic and stylistic features. 
Learning in such a high-dimensional space typically requires a large number of samples, which is inherently limited by the number of samples available within each cluster. To mitigate this issue, we apply differentially private PCA (DP-PCA) \citep{amin2019differentially, liu2022dp} to reduce the dimensionality of the embeddings and obtain compact feature representations for clustering and reward model training. This step improves sample efficiency and clustering stability while preserving the essential preference signals necessary for effective alignment.

\textbf{Utilizing Public Prompts to Save DP Budget.} Synthesizing prompts, preferred responses and less preferred responses simultaneously can quickly exhaust the privacy budget and degrade utility. To address this, we choose to use public prompts to avoid spending privacy budget on synthesizing DP prompts, allowing us to allocate the entire budget to modeling preferences over responses. For each public prompt, we generate multiple candidate responses using an LLM and apply a distribution of reward functions, which is trained from private preference data under DP, to construct preference pairs. While public prompts may differ in distribution from private ones, our experiments show that, despite this distribution shift in prompts, the synthetic preference data preserves the statistical preference distribution in the private preference dataset and remains effective for downstream preference alignment tasks.

\subsection{DPPrefSyn Algorithm}

We now introduce the DPPrefSyn algorithm (\Cref{alg:DPPrefSyn}), which generates a DP synthetic preference dataset from the private dataset $\mathcal{D}_{\text{priv}}$. DPPrefSyn consists of 3 main steps:

{\bf Step 1: Preference Representation and Clustering.} We encode each prompt-response pair in $\mathcal{D}_{\text{priv}}$ using a public sentence embedding model $\psi$. Specifically, for each sample $(x_i, a_i^+, a_i^-)$, we first concatenate the prompt with each response to form two texts: $[x_i; a_i^+]$ and $[x_i; a_i^-]$. We then compute their embeddings via $\psi$, and define their difference as $d_i \gets \psi(x_i, a_i^+) - \psi(x_i, a_i^-)$ (Line~\ref{algline:embedding_diff}). 

To reduce dimensionality privately, we apply DP-PCA \citep{amin2019differentially} to $\{d_i\}_{i=1}^n$, obtaining a projection matrix $\mathbf{P} \in \mathbb{R}^{d \times p}$ under $\varepsilon_0$-DP (Line~\ref{algline:DP-PCA}). 
Roughly speaking, DP-PCA \citep{amin2019differentially} approximates the eigendecomposition of the data covariance matrix by estimating the collections of eigenvalues and eigenvectors separately in a DP manner. Each $d_i$ is then projected to a lower-dimensional space as $z_i \gets \mathbf{P}^\top d_i$ (Line~\ref{algline:project_private_sample}). We apply DP-KMeans clustering \citep{su2016differentially} on $\{z_i\}_{i=1}^n$ to group samples with similar preference patterns under $\varepsilon_1$-DP, forming $K$ clusters ${\mathcal{C}_1, \ldots, \mathcal{C}_K}$ in Line~\ref{algline:kmeans}. Roughly, DP-KMeans \citep{su2016differentially} is a DP version of the Lloyd algorithm.

{\bf Step 2: DP Reward Model Training.} For each cluster $\mathcal{C}_k$, we learn the shared preference vector $\theta_k \in \mathbb{R}^p$ that captures preference patterns by minimizing the negative log-likelihood (Line~\ref{algline:train_reward_model}):
\begin{equation}
    \mathcal{L}_{\text{MLE}}(\theta_k) = -\frac{1}{\lvert \mathcal{C}_k \rvert} \sum_{z_i\in \mathcal{C}_k }\log \sigma\left(\langle \theta_k, z_i\rangle\right),\label{eq:mle}
\end{equation}
where $z_i$ is the DP-PCA projected embedding difference from Step 1. Each reward model is trained with the DP-SGD algorithm (a DP variant of SGD) \citep{abadi2016deep} using noise multiplier $\sigma_2$.

{\bf Step 3: DP Synthetic Preference Generation.} We compute a histogram $\mathbf{h} = \left[|\mathcal{C}_1|, \ldots, |\mathcal{C}_K|\right]$ representing the number of samples in each cluster (Line~\ref{algline:histogram}). We normalize it to get a probability distribution ${\mathbf{p}}$ over the clusters, i.e., ${\mathbf{p}}\gets {\mathbf{h}}/\lvert\mathcal{D}_{\text{priv}}\rvert$ (Line~\ref{algline:sample_prob}).

Next, for each public prompt $\tilde{x}_j$, we use a relatively high temperature to prompt an LLM to generate $L$ diverse candidate responses ${\tilde{a}_j^1, \ldots, \tilde{a}_j^L}$ (Line~\ref{algline:generate_candidate}). We then sample a cluster index $k \sim {\mathbf{p}}$ according to the DP histogram (Line~\ref{algline:cluster_index}) and evaluate the responses ${\tilde{a}_j^1, \ldots, \tilde{a}_j^L}$ using the reward model parameterized by $\theta_k$. To do this, we first compute the embeddings $\psi(\tilde{x}_j, \tilde{a}_j^{l})$ for $l=1,\ldots,L$, and project them into the DP-PCA subspace using $\mathbf{P}$. The reward value for $\tilde{a}_j^{l}$ is then computed as $\langle \theta_k, \mathbf{P}^\top \psi(\tilde{x}_j, \tilde{a}_j^{l}) \rangle$. We select the responses with the highest and lowest reward values as the preferred and less preferred responses, respectively: $c \gets \arg\max_{l\in [L]} \langle \theta_k, \mathbf{P}^\top \psi(\tilde{x}_j, \tilde{a}_j^{l}) \rangle$, $r \gets \arg\min_{l\in [L]} \langle \theta_k, \mathbf{P}^\top \psi(\tilde{x}_j, \tilde{a}_j^{l}) \rangle$ (Lines~\ref{algline:candidate_select_c}-\ref{algline:candidate_select_r}). If the value gap is too small (e.g., $< 0.5$), we discard the sample to ensure preference quality. Otherwise, we add $(\tilde{x}_j, \tilde{a}_j^c, \tilde{a}_j^r)$ to $\tilde{\mathcal{D}}$. This process is repeated for all public prompts, producing a DP synthetic preference dataset $\tilde{\mathcal{D}}$.

\begin{algorithm}[h] 
\caption{DPPrefSyn}
\label{alg:DPPrefSyn}
\begin{algorithmic}[1]

\STATE   {\bf Input:} Private dataset $\mathcal{D}_{\text {priv}}= \{(x_i, a_i^+, a_i^-)\}_{i=1}^n$, public prompts $\{\tilde{x}_j\}_{j=1}^m$, embedding model $\psi$, number of clusters $K$, an LLM $\mathrm{LLM}(\cdot)$, DP parameters $\varepsilon_{0}$, $\varepsilon_{1}$, $\sigma_2$.

\FOR{each $(x_i, a_i^+, a_i^-)$ in $\mathcal{D}_{\text{priv}}$}
    \STATE $d_i \gets \psi(x_i, a_i^+) - \psi(x_i, a_i^-)$, where $d_i\in \mathbb{R}^d$\algolabel{algline:embedding_diff}
\ENDFOR

\STATE $\mathbf{P} \gets \mathrm{DP\text{-}PCA}(\{d_i\}_{i=1}^n, \varepsilon_0)$, where $\mathbf{P}\in \mathbb{R}^{d\times p}$\algolabel{algline:DP-PCA}

\STATE $\{z_i\}_{i=1}^n \gets \{\mathbf{P}^\top d_i\}_{i=1}^n$\algolabel{algline:project_private_sample}

\STATE $\mathcal{C}_1, ..., \mathcal{C}_K \gets \mathrm{DP\text{-}KMeans}(\{z_i\}_{i=1}^n, K,\varepsilon_1)$\algolabel{algline:kmeans}

\FOR{each cluster $\mathcal{C}_k$}
    \STATE Train a linear reward model $\theta_k$ on $\mathcal{C}_k$ using $\mathrm{DP\text{-}SGD}$ with noise multiplier $\sigma_2$ (\Cref{eq:mle})\algolabel{algline:train_reward_model}
\ENDFOR

\STATE $\mathbf{h} \gets \left[\lvert\mathcal{C}_1\rvert, \ldots, \lvert\mathcal{C}_K\rvert\right]$\algolabel{algline:histogram}

\STATE ${\mathbf{p}}\gets {\mathbf{h}}/\lvert\mathcal{D}_{\text{priv}}\rvert$\algolabel{algline:sample_prob}

\FOR{each public prompt $\tilde{x}_j$}

\STATE Generate responses $\tilde{a}_j^{1}, \ldots, \tilde{a}_j^{L} \gets \mathrm{LLM}(\tilde{x}_j)$\algolabel{algline:generate_candidate}

\STATE Sample cluster index $k \sim {\mathbf{p}}$\algolabel{algline:cluster_index}

    \STATE $c \gets \arg\max_{l\in [L]} \langle \theta_k, \mathbf{P}^\top \psi(\tilde{x}_j, \tilde{a}_j^{l}) \rangle$\algolabel{algline:candidate_select_c}
    
    \STATE $r \gets \arg\min_{l\in [L]} \langle \theta_k, \mathbf{P}^\top \psi(\tilde{x}_j, \tilde{a}_j^{l}) \rangle$\algolabel{algline:candidate_select_r}
    
    \STATE Add $(\tilde{x}_j, \tilde{a}_j^{c}, \tilde{a}_j^{r})$ to $\tilde{\mathcal{D}}$\algolabel{algline:synthetic_data}
\ENDFOR

\STATE \textbf{return} $\tilde{\mathcal{D}}$.

\end{algorithmic}
\end{algorithm}

{\bf Privacy Analysis of DPPrefSyn.} We allocate a privacy budget \(\varepsilon_0\) to DP-PCA (Line~\ref{algline:DP-PCA}), and $\varepsilon_1$ to DP-KMeans (Line~\ref{algline:kmeans}), with the remaining budget used to train per-cluster reward models using DP-SGD with noise multiplier $\sigma_2$ (Line~\ref{algline:train_reward_model}). For training DP reward models, we apply parallel composition of DP: since the clusters are disjoint, modifying one data point in the training data affects only one cluster, so the total privacy cost is bounded by the cost of training on the smallest cluster. We ensure a known lower bound on the cluster size by discarding clusters with too few samples. We use the Privacy Random Variable (PRV) accountant \citep{gopi2021numerical} to compose the privacy costs of each DP-SGD step, ensuring that it satisfies $(\varepsilon - \varepsilon_0-\varepsilon_1, \delta)$-DP. All other steps involve only public data or post-processing, and incur no additional privacy cost. 
We provide a full proof and code for privacy accounting in \Cref{appx:proof DP}.

{\bf Discussion on Prompt Distribution Shift.} DPPrefSyn remains effective even when the prompt distribution is different in private and public datasets. We note that the diverse user preference is captured through $\theta_k$, the parameter in the linear reward model, instead of $\phi(x,a)$, the feature vector associated with the prompt-response pair. To make this intuitive, consider that a human user may interact with an LLM through diverse prompts, though their underlying preference does not vary. Therefore, as long as we ensure the distribution of $\theta_k$ resembles that in the private dataset, the resulting synthetic preferences will still reflect the private preference patterns. This is achieved by randomly sampling a cluster index $k$ according
to the DP histogram (Line~\ref{algline:cluster_index}) for each public prompt, and then select the synthetic response pairs accordingly. Therefore, even if the prompt distribution may differ in private and public datasets, DPPrefSyn can still preserve the preference patterns captured in the private data.

\section{Experiments}\label{sec:experiments}

{\bf Datasets.} We evaluate on question answering and summarization tasks. For question answering, we use the OpenAssistant dataset \citep{kopf2023openassistant}, which contains assistant-style conversations, and the Anthropic-HH dataset \citep{bai2022training}, which provides human preference comparisons focused on helpfulness and harmlessness. For summarization, we use the TL;DR dataset \citep{stiennon2020learning}, which contains annotations of human preference on pairs of summaries. To simulate public data, we use prompts from Alpaca \citep{taori2023alpaca} for the OpenAssistant QA task, SafeRLHF \citep{ji2024pku} for the Anthropic-HH QA task, and XSum \citep{narayan2018don} for the TL;DR summarization task. 
Additional details are available in \Cref{appx:dataset}.

{\bf Setup for \Cref{alg:DPPrefSyn}.} By default, we use \texttt{BAAI/bge-large-en-v1.5} \citep{bge_embedding} as the embedding model $\psi$. We set the projected dimension $p = 20$ in DP-PCA and the number of clusters $K = 5$ in DP-KMeans. For DP-SGD, we use a learning rate of 0.1, a batch size of 4, 4 training epochs, and a gradient clipping norm of 1.0. Candidate responses are generated using the instruction-finetuned LLaMA-7B-chat model \citep{touvron2023llama} with a temperature of 0.9. We set the number of candidates per prompt to $L = 5$. We consider overall privacy budgets $\varepsilon=0.5,1,2,4,8$ and set $\delta=1/\lvert\mathcal{D}_\text{priv} \rvert$. {\it We study the effect of different generator models in \Cref{tab:downstream_LLM}, embedding models and hyperparameters in \Cref{sec:validate_DPPrefSyn}.}

{\bf Evaluation.} For downstream preference alignment, we fine-tune the Pythia-2.8B model \citep{biderman2023pythia}. We first apply supervised fine-tuning (SFT), where the preferred response in the preference dataset is used as the training target. We then apply the DPO algorithm \citep{rafailov2024direct} to further fine-tune the SFT model using preference pairs. Following \citet{rafailov2024direct}, we measure the win rate of the model-generated responses against the preferred responses in the test set using the GPT-4o model. {\it We study more types of downstream models, and different preference alignment methods as ablation study in \Cref{sec:property_DPPrefSyn}.}

{\bf Baselines.} We compare DPPrefSyn with fine-tuning the downstream model on real data using DP-SGD \citep{abadi2016deep}, denoted as DP-FT. Note that DPPrefSyn is more reusable and versatile than DP-FT because it generates DP synthetic data that can be used to train different models with different alignment algorithms without incurring additional privacy cost, while DP-FT consumes the privacy budget in training a single privatized model. We also include a fully private baseline $\varepsilon = 0$, where we evaluate the base model without any preference fine-tuning. 

Details about the setups, hyperparameters, metrics, and baselines are provided in \Cref{appx:implementation}.

\subsection{Understanding the Performance of DPPrefSyn}\label{sec:main_results}

In this section, we analyze the performance of DPPrefSyn by answering two research questions about its privacy-utility trade-off. Our main results are presented in \Cref{tab:main}.

\begin{table*}[t!]
\footnotesize
\centering
\setlength{\tabcolsep}{4.5pt}
\caption{\small GPT-4o win rates on test sets with the downstream Pythia-2.8B model fine-tuned by SFT and DPO. \(\varepsilon = 0\) denotes the base (fully private, non-finetuned) LLM. {\it Compared with DP-FT, DPPrefSyn achieves higher win rates under privacy levels $\varepsilon=0.5,1,2,4,8$, even outperforming the utility of non-private fine-tuning (DP-FT with $\varepsilon=\infty$).} Results show mean and standard deviation over 5 random seeds.}
\vspace{-0.05in}
\label{tab:main}
\resizebox{0.99\textwidth}{!}{
\begin{tabular}{ccccccccccc}
\toprule 
Task & $\varepsilon = 0$ & Method & Data Type & Pref. Alignment & $\varepsilon = 0.5$ & $\varepsilon = 1$ & $\varepsilon = 2$ &  $\varepsilon = 4$ & $\varepsilon = 8$ & $\varepsilon = \infty$ \\
\midrule
\multirow{4}{*}{OpenAssistant} & \multirow{4}{*}{$2.11$} & \multirow{2}{*}{DP-FT} & \multirow{2}{*}{Original} & SFT & $3.96_{0.20}$ & $4.29_{0.21}$ & $4.38_{0.23}$ & $4.46_{0.18}$ & $4.49_{0.30}$ & $4.75_{0.27}$ \\

&&&& SFT+DPO & $5.65_{0.32}$ & $6.04_{0.58}$ & $6.18_{0.83}$ &  $6.12_{0.63}$ & $6.32_{0.71}$ & ${8.20}_{0.62}$ \\

\cmidrule{3-11}
\rowcolor{teal!10}
\cellcolor{white}&\cellcolor{white}&&& SFT &  $8.96_{0.39}$ & $9.07_{0.32}$ & $9.10_{0.51}$ & $9.33_{0.56}$ & $9.63_{0.38}$ & ${9.80}_{0.58}$ \\
\rowcolor{teal!10}
\cellcolor{white} & \cellcolor{white} & \multirow{-2}{*}{DPPrefSyn} & \multirow{-2}{*}{Synthetic} & SFT+DPO & $\mathbf{9.86_{0.59}}$ & $\bf{10.31_{0.54}}$ & $\bf{11.04_{0.92}}$ & $\bf{11.85_{0.99}}$ & $\bf{12.05_{0.82}}$ & $\bf{{12.36}_{0.55}}$ \\
\midrule

\multirow{4}{*}{Anthropic-HH} & \multirow{4}{*}{$12.14$} & \multirow{2}{*}{DP-FT} & \multirow{2}{*}{Original} & SFT & $29.77_{0.76}$ & $31.10_{0.72}$ & $31.53_{0.77}$ & $31.29_{0.65}$ & $31.37_{0.67}$ & $31.98_{0.56}$ \\

&&&& SFT+DPO & $35.00_{0.49}$ & $36.27_{0.63}$ & $37.02_{0.72}$ & $36.74_{0.90}$ & $36.94_{1.06}$ & $38.72_{0.56}$ \\

\cmidrule{3-11}
\rowcolor{teal!10}
\cellcolor{white}&\cellcolor{white}&&& SFT & $54.90_{0.66}$ & $55.19_{0.91}$ & $55.21_{0.82}$ & $54.95_{1.10}$ & $55.63_{0.86}$ & ${55.95}_{0.70}$ \\

\rowcolor{teal!10}
\cellcolor{white} & \cellcolor{white} & \multirow{-2}{*}{DPPrefSyn} & \multirow{-2}{*}{Synthetic} & SFT+DPO & $\mathbf{55.08_{1.04}}$ & $\bf{55.96_{0.80}}$ & $\bf{56.48_{0.61}}$ & $\bf{56.51_{1.93}}$ & $\bf{56.86_{1.17}}$ & $\bf{57.53_{0.93}}$ \\

\midrule

\multirow{4}{*}{TL;DR} & \multirow{4}{*}{$11.64$} & \multirow{2}{*}{DP-FT} & \multirow{2}{*}{Original} & SFT & $21.19_{1.12}$ & $22.41_{1.09}$ & $22.57_{0.65}$ & $22.45_{0.90}$ & $23.04_{0.79}$ & $23.82_{1.26}$ \\
&&&& SFT+DPO & $59.69_{0.82}$ & $62.21_{1.12}$ & $62.02_{1.67}$ & $63.00_{0.73}$ & $63.06_{0.84}$ & $64.52_{1.79}$ \\
\cmidrule{3-11}
\rowcolor{teal!10}
\cellcolor{white}&\cellcolor{white}&&& SFT & $61.92_{1.06}$ & $62.14_{1.61}$ & $62.49_{1.48}$ & $62.72_{1.66}$ & $62.29_{1.14}$ & $62.55_{1.23}$ \\
\rowcolor{teal!10}
\cellcolor{white} & \cellcolor{white} & \multirow{-2}{*}{DPPrefSyn} & \multirow{-2}{*}{Synthetic} & SFT+DPO & $\mathbf{67.50_{0.79}}$ & $\bf{68.07_{1.95}}$ & $\bf{68.54_{2.76}}$ & $\bf{68.56_{1.59}}$ & $\bf{71.01_{2.17}}$ & $\bf{72.20_{1.08}}$ \\
\bottomrule

\end{tabular}   
}
\end{table*}

\textit{RQ1: Can DPPrefSyn be a better choice than fine-tuning on real private data?} {\bf DPPrefSyn consistently achieves a stronger privacy–utility trade-off than DP-FT across all tasks under $\varepsilon=0.5,1,2,4,8$, and outperforms the utility of fine-tuning on real private data without DP constraints (DP-FT with $\varepsilon=\infty$), while preserving privacy.} For instance, on the OpenAssistant QA task with $\varepsilon = 2$, our approach achieves $9.10\%$ after SFT and $11.04\%$ after DPO, compared to DP-FT’s $4.38\%$ after SFT and $6.18\%$ after DPO. Remarkably, while still preserving privacy, DPPrefSyn achieves higher utility than the non-private baseline of fine-tuning on real private data (DP-FT with $\varepsilon=\infty$), which reaches $4.75\%$ after SFT and $8.20\%$ after DPO.
Similarly, on the Anthropic-HH dataset with $\varepsilon=2$, DPPrefSyn achieves a $55.21\%$ win rate after SFT and $56.48\%$ after DPO, compared to $31.53\%$ and $37.02\%$ achieved by DP-FT, respectively. It also outperforms direct fine-tuning on real data (DP-FT with $\varepsilon=\infty$), which achieves $31.98\%$ after SFT and $38.72\%$ after DPO.

These results highlight two \textbf{key advantages} of our DP synthetic data. \textbf{First}, DPPrefSyn leads to significant performance improvement after SFT alone compared to directly applying SFT on private data. For example, on the TL;DR summarization task with $\varepsilon = 2$, DPPrefSyn achieves a win rate of $62.49\%$ after SFT, much higher than the $22.57\%$ achieved by SFT on private data. This is mainly because DPPrefSyn generates candidate responses using a high-performance LLM, leading to high-quality training data that provides more effective supervision during SFT. This advantage holds even with limited public data. For instance, on the OpenAssistant QA task with $\varepsilon = 4$ and only 5K public prompts, DPPrefSyn achieves a win rate of $6.80\%$ after SFT, compared to $4.46\%$ for SFT directly on private data (\Cref{tab:size}), which contains around 14K samples, highlighting DPPrefSyn’s ability to handle data scarcity while preserving utility. \textbf{Second}, applying DPO with our synthetic data on top of the SFT model leads to further improvements, showing that our synthetic preference pairs capture meaningful distinctions aligned with human preferences. This is enabled by the private reward functions that capture diverse user preferences while preserving privacy. Compared to directly using private data, the improvement from DPO is smaller in our setting, likely because the SFT model trained on our synthetic data already performs well, leaving less room for optimization. However, for the entire fine-tuning process, the performance gain using our DP synthetic dataset is still significantly higher than that using real private data. 

\textit{RQ2: How does DPPrefSyn perform across different privacy budget $\varepsilon$?} {\bf First}, \Cref{tab:main} shows that DPPrefSyn generally achieves better performance compared to DP-FT as $\varepsilon$ increases from 0.5, 1, 2, 4, 8 to $\infty$, indicating good scalability with privacy budget $\varepsilon$. {\bf Second}, DPPrefSyn can be more robust to DP noise than DP-FT in some cases. For example, on OpenAssistant with Llama-3-8B as the downstream model, DPO performance for DP-FT drops from 59.13\% to 45.79\% when $\varepsilon$ decreases from $\infty$ to 4, while DPPrefSyn drops only slightly from 61.52\% to 60.25\% (see \Cref{tab:downstream_LLM} in \Cref{sec:property_DPPrefSyn}). The reason could be that DP-FT perturbs model parameters directly through DP-SGD, whereas DPPrefSyn adds noise during data synthesis, which better preserves utility.

\subsection{Understanding the Properties of DPPrefSyn}\label{sec:property_DPPrefSyn}

In this section, we study properties of DPPrefSyn, including its generalizability across different preference alignment methods and LLMs, its performance under varying scales of public data, and its robustness to empirical privacy attacks.

\begin{table}
    \vspace{-0.4cm}
    \caption{\small DPPrefSyn achieves higher win rates than fine-tuning on private data across SFT, DPO, and PPO, while providing strong privacy guarantees on OpenAssistant. Results show mean and standard deviation over 5 random seeds.}
    \vspace{-0.4cm}
    \label{tab:RLHF}
    \begin{center}
    \renewcommand{\arraystretch}{1.2}
    \adjustbox{max width= \linewidth}{
    \begin{tabular}{cccc}
    \toprule
    Pref. Alignment & $\varepsilon=0$ & DPPrefSyn ($\varepsilon=4$) & DP-FT ($\varepsilon=\infty$)\\
    \midrule
    SFT &\multirow{3}{*}{2.11}& $\bf{9.33_{0.56}}$ & $4.75_{0.27}$ \\
    SFT+DPO && $\bf{11.85_{0.99}}$ & ${8.20}_{0.62}$ \\
    SFT+PPO && $\bf{10.11_{0.61}}$  & ${7.08}_{0.78}$ \\
    \bottomrule
    \end{tabular}
    }
    \end{center}
\vspace{-0.2in}
\end{table}

\textit{RQ3: Is DPPrefSyn effective across different preference alignment methods?} {\bf DPPrefSyn supports various preference alignment methods and remains effective under strong DP constraints.} In \Cref{tab:RLHF}, we evaluate DPPrefSyn with SFT, DPO, and PPO. For PPO, we train $K$ reward models (one for each cluster) to capture diverse preference patterns. During PPO training, we use the average output of the $K$ reward models as the reward and optimize the policy. We find that DPPrefSyn consistently achieves better utility for SFT, DPO, and PPO, compared with fine-tuning on private data, while preserving strong privacy guarantees.

\begin{table}
    \vspace{-0.4cm}
    \caption{\small DPPrefSyn achieves higher win rates than fine-tuning on private data across different downstream LLMs after DPO, while providing strong privacy guarantees on OpenAssistant. DPPrefSyn can also benefit from more powerful generator LLMs.}
    \vspace{-0.4cm}
    \label{tab:downstream_LLM}
    \begin{center}
    \renewcommand{\arraystretch}{1.2}
    \adjustbox{max width= \linewidth}{
    \begin{tabular}{cccccc}
    \toprule
    Downstream LLM & $\varepsilon=0$ & Method & Generator LLM & $\varepsilon=4$ & $\varepsilon=\infty$\\
    \midrule
     & & DP-FT & - & 54.92 & 59.41 \\
    \cmidrule{3-6}
    \rowcolor{teal!10}
    \cellcolor{white} & \cellcolor{white} &  & Qwen-3-4B-Instruct & \bf{86.52} & \bf{87.08} \\
    \rowcolor{teal!10}
    \cellcolor{white}\multirow{-3}{*}{Qwen-3-4B}&\cellcolor{white}\multirow{-3}{*}{48.17}&\multirow{-2}{*}{DPPrefSyn}& Llama-3-8B-Instruct & 51.69 & 52.11 \\
    \midrule

     & & DP-FT & - & 30.06 & 33.85 \\
    \cmidrule{3-6}
    \rowcolor{teal!10}
    \cellcolor{white} & \cellcolor{white} &  & Qwen-3-4B-Instruct & \bf{36.24} & \bf{37.08} \\
    \rowcolor{teal!10}
    \cellcolor{white}\multirow{-3}{*}{Gemma-2-2B}&\cellcolor{white}\multirow{-3}{*}{7.30}&\multirow{-2}{*}{DPPrefSyn}& Llama-3-8B-Instruct & 36.01 & 36.66 \\

    \midrule

     & & DP-FT & - & 45.79 & 59.13 \\
    \cmidrule{3-6}
    \rowcolor{teal!10}
    \cellcolor{white} & \cellcolor{white} &  & Qwen-3-4B-Instruct & \bf{60.25} & \bf{61.52} \\
    \rowcolor{teal!10}
    \cellcolor{white}\multirow{-3}{*}{Llama-3-8B}&\cellcolor{white}\multirow{-3}{*}{16.15}&\multirow{-2}{*}{DPPrefSyn}& Llama-3-8B-Instruct & 42.28 & 44.10  \\
    
    \bottomrule
    \end{tabular}
    }
    \end{center}
\vspace{-0.2in}
\end{table}

\textit{RQ4: Can more powerful downstream LLMs benefit from synthetic data generated by DPPrefSyn?} {\bf DPPrefSyn can be effective across a wide range of downstream LLMs.} In \Cref{tab:downstream_LLM}, we conduct experiments on different downstream models, including Qwen-3-4B, Gemma-2-2B, and Llama-3-8B, using Llama-3-8B-Instruct and Qwen3-4B-Instruct-2507 as generator models. We find that DPPrefSyn can outperform fine-tuning on real data, while ensuring privacy across different model families after DPO. For example, using Qwen3-4B-Instruct-2507 as the generator, DPPrefSyn achieves an 86.52\% win rate on Qwen-3-4B with $\varepsilon=4$, which significantly outperforms non-private fine-tuning on real data (59.41\%). We also observe that DPPrefSyn's performance improves with stronger generator models, with Qwen3-4B-Instruct-2507 outperforming Llama-3-8B-Instruct.

\begin{table}
    \caption{\small DPPrefSyn achieves higher win rates than fine-tuning on private data with varying public data size $m$. Results show mean and standard deviation over 5 random seeds.}
    \vspace{-0.4cm}
    \label{tab:size}
    \begin{center}
    \renewcommand{\arraystretch}{1.2}
    \adjustbox{max width= \linewidth}{
    \begin{tabular}{c|ccc|cc}
    \toprule

     & \multicolumn{3}{c|}{DPPrefSyn ($\varepsilon=4$)} & DP-FT ($\varepsilon=4$) & DP-FT ($\varepsilon=\infty$) \\

     \midrule

     Data Type & Synthetic & Synthetic & Synthetic  & Original  & Original  \\
     (Size) & ($m=5,000$) & ($m=10,000$) & ($m=52,000$) &(14,167) & (14,167) \\

     \midrule
     SFT & $6.80_{0.66}$ & $7.30_{0.48}$ & $9.33_{0.56}$  & $4.46_{0.18}$  & $4.75_{0.27}$  \\
     SFT+DPO & $9.44_{0.45}$ & $9.55_{0.92}$ & $11.85_{0.99}$  & $6.12_{0.63}$  & ${8.20}_{0.62}$  \\ 
    \bottomrule
    \end{tabular}
    }
    \end{center}
\vspace{-0.2in}
\end{table}

\textit{RQ5: Can DPPrefSyn be effective with varying scales of public data?} We study the effect of public data size $m$ in \Cref{tab:size}. The private dataset (OpenAssistant) contains 14K preference annotations, while the full public dataset (Alpaca) contains 52K prompts. We randomly sample subsets of the public prompts at different sizes ($m=5\text{K},10\text{K},52\text{K}$) to generate synthetic preference data using DPPrefSyn. We find that even with only 5K public prompts, DPPrefSyn achieves a win rate of $6.80\%$ after SFT and $9.44\%$ after DPO under $\varepsilon=4$, outperforming the DP-FT baseline, which achieves $4.46\%$ and $6.12\%$, respectively. This shows that {\bf DPPrefSyn remains effective even when the available public data is limited}, thanks to the high-quality responses generated by the LLMs and the private reward functions that accurately capture user preferences. Besides, {\bf as the public data size increases, the performance improvement over the baseline becomes more prominent}, indicating that DPPrefSyn can effectively leverage the vast amounts of public data for preference alignment, further mitigating the data scarcity issue in private preference data.

\begin{table}
    \caption{\small DPPrefSyn exhibits lower AUC scores against MIAs \citep{feng2025exposing} on OpenAssistant, close to random guessing. Results show mean and standard deviation over 5 random seeds.}
    \vspace{-0.4cm}
    \label{tab:mia}
    \begin{center}
    \renewcommand{\arraystretch}{1.2}
    \adjustbox{max width= \linewidth}{
    \begin{tabular}{ccccc}
    \toprule
    Downstream LLM & Method & Generator LLM & $\varepsilon=4$ & $\varepsilon=\infty$\\
    \midrule
     & DP-FT & - & $70.15_{2.09}$ & $73.52_ {2.30}$ \\
    \cmidrule{2-5}
    \rowcolor{teal!10}
    \cellcolor{white} &  &  Qwen-3-4B-Instruct & $50.99_{1.10}$ & $50.87_{0.84}$ \\
    \rowcolor{teal!10}
    \cellcolor{white}\multirow{-3}{*}{Qwen-3-4B}&\multirow{-2}{*}{DPPrefSyn}& Llama-3-8B-Instruct & $\bf 50.64_{1.29}$ & $\bf 49.98_{1.19}$ \\
    \midrule

     & DP-FT & - & $62.88_{1.78}$ &  $65.39_{2.59}$ \\
    \cmidrule{2-5}
    \rowcolor{teal!10}
    \cellcolor{white} &  & Qwen-3-4B-Instruct & $\bf 50.51_{1.38}$ & $\bf 50.43_{0.91}$ \\
    \rowcolor{teal!10}
    \cellcolor{white}\multirow{-3}{*}{Gemma-2-2B}&\multirow{-2}{*}{DPPrefSyn}& Llama-3-8B-Instruct & $50.77_{0.83}$ &  $51.10_{0.54}$ \\

    \midrule

     & DP-FT & - & $61.95_{2.05}$ & $70.13_{1.24}$ \\
    \cmidrule{2-5}
    \rowcolor{teal!10}
    \cellcolor{white} &  & Qwen-3-4B-Instruct & $50.53_{0.62}$ & $50.96_{0.95}$ \\
    \rowcolor{teal!10}
    \cellcolor{white}\multirow{-3}{*}{Llama-3-8B}&\multirow{-2}{*}{DPPrefSyn}& Llama-3-8B-Instruct & $\bf 50.45_{0.87}$ & $\bf 50.67_{0.96}$ \\
    
    \bottomrule
    \end{tabular}
    }
    \end{center}
\vspace{-0.2in}
\end{table}

\begin{figure}[t]
\centering
\begin{subfigure}[t]{0.47\linewidth}
  \centering
  \includegraphics[width=\linewidth]{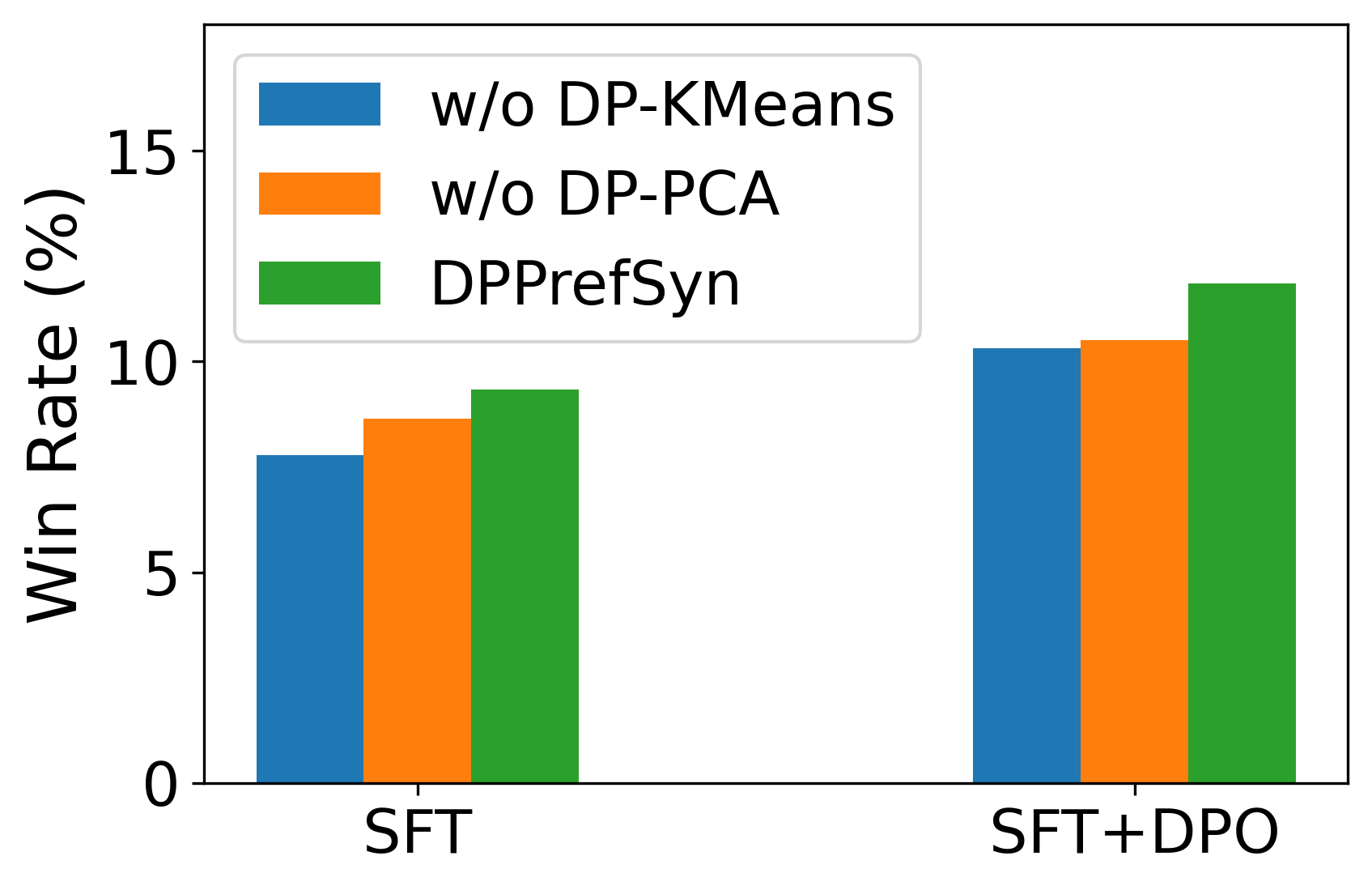}

  \caption{\small Removing DP-KMeans or DP-PCA.}
  \label{fig:remove}
\end{subfigure}\hfill
\begin{subfigure}[t]{0.47\linewidth}
  \centering
  \includegraphics[width=\linewidth]{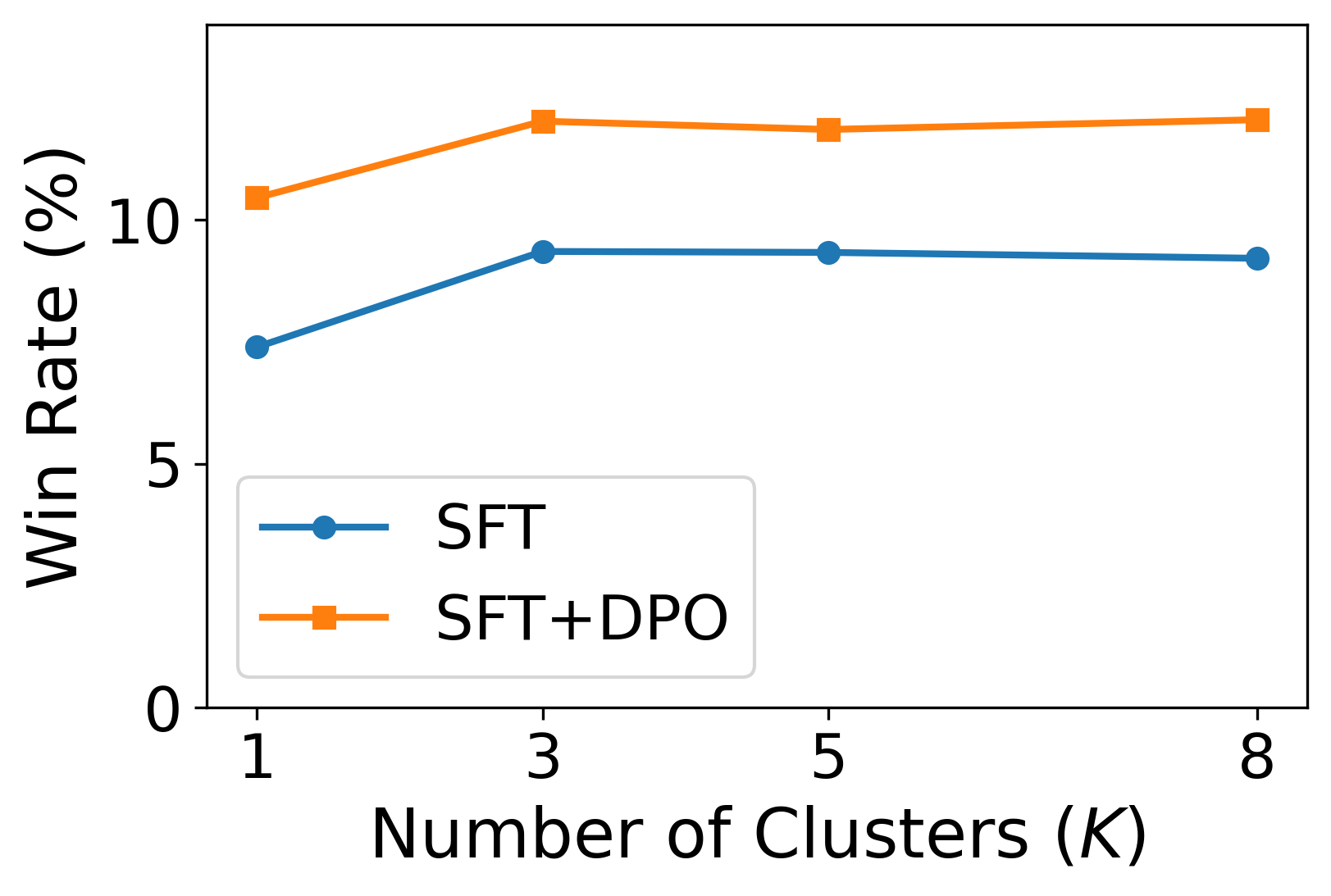}

  \caption{\small Varying clusters.}
  \label{fig:cluster}
\end{subfigure}
\vspace{-0.2in}
\end{figure}

\textit{RQ6: Can DPPrefSyn effectively mitigate empirical privacy attacks?} In \Cref{tab:mia}, we perform membership inference attacks (MIAs) \citep{shokri2017membership} against the DPO fine-tuned downstream models on OpenAssistant and report the AUC score. We follow the prior work \citep{feng2025exposing}, where the authors instantiate a MIA on preference data for LLM alignment. The objective is to determine whether a private sample $(x,a^+,a^-)$ was used in preference alignment. We find that DP-FT does not sufficiently mitigate MIAs, while {\bf DPPrefSyn consistently exhibits lower AUC scores, close to random guessing}. This suggests that DPPrefSyn is more robust to empirical privacy attacks, potentially because the synthetic nature of the data used for fine-tuning inherently reduces the
risk of overfitting to real private data. We defer the details to \Cref{appx:MIAs}.

\subsection{Validating the Design of DPPrefSyn}\label{sec:validate_DPPrefSyn}

We validate the design of DPPrefSyn by studying its algorithmic components in this section.

\textit{RQ7: How does modeling diverse human preferences impact DPPrefSyn’s performance?} We evaluate this by removing either DP-KMeans or DP-PCA, and varying the number of clusters $K$, using OpenAssistant with $\varepsilon=4$. \textbf{(i) Removing either DP-KMeans or DP-PCA.} \Cref{fig:remove} shows that removing either DP-KMeans or DP-PCA leads to a performance drop. Without DP-KMeans, all samples are assigned to a single cluster. Removing DP-PCA leads to ineffective clustering and sample-inefficient learning in a high-dimensional space. Using multiple clusters with both components improves performance over using a single cluster.
\textbf{(ii) Varying the number of clusters $K$.} In \Cref{fig:cluster}, we evaluate DPPrefSyn with $K=1,3,5,8$ clusters. As $K$ increases, performance first improves at $K=3$ and then remains relatively stable at $K=5,8$. This suggests that DPPrefSyn is robust to $K$, and moderate values (e.g., $K=3,5$) provide a good trade-off between diversity and data efficiency. Overall, these results show the benefit of capturing diverse preferences using multiple clusters.

\textit{RQ8: How do different embedding models or hyperparameters affect DPPrefSyn’s performance?} We perform ablation studies on the embedding model $\psi$, DP-PCA dimension $p$, number of candidates per prompt $L$, filter threshold, and privacy budget allocation in \Cref{appx:hyperparameter}. Across different sentence encoders, including \texttt{bge-large-en-v1.5} \citep{bge_embedding}, \texttt{all-mpnet-base-v2} \citep{song2020mpnet}, and \texttt{all-distilroberta-v1} \citep{sanh2019distilbert}, DPPrefSyn remains consistently effective. We find that $p=20$ provides the best overall performance after DPO, $L=5$ serves as a good default choice, and a moderate filtering threshold of $0.5$ slightly outperforms no filtering, while a high threshold of $1$ is too aggressive and hurts performance. Finally, DPPrefSyn is robust to reasonable changes in the budget allocation $\varepsilon_0$ and $\varepsilon_1$.

\textit{RQ9: Are DPPrefSyn’s gains over DP-FT simply due to higher-quality public prompts or stronger response generators?}
To rule out confounding effects, we conduct controlled ablations where all methods share the same prompt set or response generator. Under these settings, DPPrefSyn continues to outperform DP-FT (Appendix~\ref{appx:data_quality}), showing that {\bf its performance gains do not merely arise from higher-quality public prompts or stronger response generators}.

\section{Conclusion}\label{sec:conclusion}

In this work, we introduced DPPrefSyn, an algorithm for synthesizing DP preference data from a private dataset. By modeling preference diversity via clustering, improving efficiency with DP-PCA, and conserving privacy budgets through public prompts, DPPrefSyn enables effective and private preference data synthesis for the alignment of LLMs. Our experiments show that DPPrefSyn achieves privacy guarantees while outperforming fine-tuning on real data. 

\section{Limitations}

DPPrefSyn requires LLM API access to synthesize preference data, which may incur additional API costs. Our work focuses on English text-based preference data, and extending DPPrefSyn to multilingual settings and multimodal data remains an open direction. Additionally, our experiments cover models up to 8B parameters due to computational constraints.

\section*{Impact statement}

This work enables preference alignment of large language models without exposing personal information, which helps balance the benefits of big data with the need to protect individual privacy, promoting ethical data usage and fostering societal trust in an increasingly data-driven world.

\section*{Acknowledgements}

This work was supported in part by the U.S. National Science Foundation under the grant  ECCS-2531023.

\bibliography{main}
\bibliographystyle{icml2026}

\appendix

\newpage
\onecolumn
\section{Privacy Analysis}\label{appx:proof DP}

In this section, we provide the privacy analysis for \Cref{alg:DPPrefSyn}. 

\begin{theorem}
    \Cref{alg:DPPrefSyn} is $(\varepsilon , \delta)$-differentially private.
\end{theorem}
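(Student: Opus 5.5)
The plan is to view \Cref{alg:DPPrefSyn} as an adaptive composition of three private mechanisms, followed by post-processing. The mechanisms are $\mathcal{M}_0$ (DP-PCA, Line~\ref{algline:DP-PCA}), $\mathcal{M}_1$ (DP-KMeans on the projected points, Line~\ref{algline:kmeans}), and $\mathcal{M}_2$ (the collection of per-cluster DP-SGD runs, Line~\ref{algline:train_reward_model}). The total budget is split as $\varepsilon=\varepsilon_0+\varepsilon_1+\varepsilon_2$, with $\delta$ charged entirely to $\mathcal{M}_2$.

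\textbf{Step 1: the embedding stage is a per-record map.} The map $(x_i,a_i^+,a_i^-)\mapsto d_i$ uses only the public encoder $\psi$ and acts on each record separately. Hence neighboring datasets $\mathcal{D},\mathcal{D}'$ produce neighboring multisets $\{d_i\}$ that differ in one entry. I will assume $\psi$ is normalized, so that $\|d_i\|\le 2$. This is the norm bound required by the sensitivity analysis of \citet{amin2019differentially}.

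\textbf{Step 2: the first two mechanisms.} DP-PCA is $\varepsilon_0$-DP by \citet{amin2019differentially}. Conditioned on the released $\mathbf{P}$, the map $d_i\mapsto z_i=\mathbf{P}^\top d_i$ is again per-record. DP-KMeans \citep{su2016differentially} is $\varepsilon_1$-DP as a function of $\{z_i\}$ for every fixed $\mathbf{P}$; this needs a bounded domain, which follows from $\|\mathbf{P}^\top d_i\|\le\|d_i\|$ for orthonormal $\mathbf{P}$. Basic adaptive composition then makes the release of $(\mathbf{P},\text{centroids})$ $(\varepsilon_0+\varepsilon_1)$-DP.

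\textbf{Step 3: DP-SGD via parallel composition.} I expect this step to be the main obstacle. Given the released centroids, a sample's cluster membership is determined by that sample alone. Therefore, changing one record changes the content of at most one cluster, or moves the record from one cluster to another, as a remove/add pair. For each cluster the DP-SGD run uses sampling rate $q_k=B/|\mathcal{C}_k|$, noise multiplier $\sigma_2$, clipping norm $1$, and $T_k$ steps. The PRV accountant \citep{gopi2021numerical} certifies $(\varepsilon_2,\delta)$-DP for the worst case, which is the smallest admissible cluster size $n_{\min}$, since smaller clusters have the largest $q$ and the most steps over the fixed epochs. Clusters below $n_{\min}$ are discarded, which is a data-independent rule given the centroids.

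The delicate point is that cluster sizes, and hence $q_k$, depend on the data. I would handle this by fixing the sampling rate at $B/n_{\min}$ and the step count as public parameters. Alternatively, I would argue that the accountant bound is monotone in $q$ and $T$, so the worst case dominates; I also need to check that the replace-one relation within a cluster is covered by the accountant's neighboring notion. The move-between-clusters case touches two clusters, each by one add or remove. I would handle it under add/remove adjacency, or absorb it by having the accountant target a bound valid for replacement adjacency.

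\textbf{Step 4: post-processing and final composition.} The histogram $\mathbf{h}$ on Line~\ref{algline:histogram} is taken from the clusters, and I must check whether this leaks information. I would argue it is an output of DP-KMeans, whose noisy counts are released in Lloyd's algorithm of \citet{su2016differentially}; otherwise one would add Laplace noise with part of $\varepsilon_1$. Note that $|\mathcal{D}_{\text{priv}}|$ is treated as public under replacement adjacency. Everything else uses only the public prompts, the LLM, $\mathbf{P}$, $\theta_k$, and $\mathbf{p}$, so the post-processing property applies. Composing the three mechanisms gives $(\varepsilon_0+\varepsilon_1+(\varepsilon-\varepsilon_0-\varepsilon_1),\,\delta)=(\varepsilon,\delta)$-DP.
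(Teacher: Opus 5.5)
Your decomposition is the paper's: $\varepsilon_0$ for DP-PCA, $\varepsilon_1$ for DP-KMeans, a PRV-accounted subsampled-Gaussian bound for DP-SGD at the smallest admissible cluster combined by parallel composition, post-processing for the rest, and basic composition at the end. The paper's proof only asserts that the smallest cluster governs the cost and that all remaining steps are post-processing, so scrutinizing these points is the right instinct. However, two of the justifications you supply are false.

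First, discarding clusters with fewer than $\lvert\mathcal{D}_{\text{priv}}\rvert/(K+4)$ points is \emph{not} a data-independent rule given the centroids. The centroids fix each record's membership, but the decision thresholds the exact count $\lvert\mathcal{C}_k\rvert$, which is an aggregate of the private data. On neighboring datasets, a cluster sitting at the threshold is trained in one run and dropped in the other. None of your three mechanisms pays for that bit.

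Second, with batch size $B$ and a fixed number $E$ of epochs, a cluster of size $n_k$ runs $T_k=En_k/B$ steps. So the smallest cluster has the largest $q$ but the \emph{fewest} steps, not the most. Because $q$ and $T$ move in opposite directions, monotonicity of the accountant in each parameter separately does not identify the worst case. The smallest cluster plausibly does dominate (e.g.\ $q^2T=EB/n_k$ in the small-$q$ RDP bound), but that needs its own argument.

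Your first option repairs both points. If $q$ and $T$ are public constants, every per-cluster run is the same Poisson-subsampled Gaussian mechanism regardless of $\lvert\mathcal{C}_k\rvert$. No size lower bound is then needed, and the discard step can be dropped or based on noisy counts.

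Do not claim that $\mathbf{h}$ is post-processing. As defined, it is the vector of exact sizes, not an output of DP-KMeans. Only your fallback works: substitute noisy counts and normalize by their own sum. These can be the counts DP-Lloyd already computes, if $\mathcal{C}_k$ is taken to be the last iteration's assignment, or fresh Laplace counts charged to $\varepsilon_1$.

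Finally, commit to one adjacency. The curve composed in the paper's accounting covers a single added or removed record. Under replacement, a substituted record can leave $\mathcal{C}_k$ and enter $\mathcal{C}_{k'}$, which is a removal in one run and an addition in another. You would then need the composition of two such curves, or the substitution curve within a cluster, not one. Under add/remove, $\lvert\mathcal{D}_{\text{priv}}\rvert$ is private, but that is harmless once neither $\mathbf{p}$ nor the discard threshold uses it.

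With these choices your argument yields $(\varepsilon,\delta)$-DP. It does so for this mildly modified instantiation, not for the algorithm exactly as written and implemented.
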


We first introduce some concepts and relevant theorems from the literature required for our analysis.

Many DP algorithms, such as the Gaussian mechanism, provide a family of $(\varepsilon,\delta)$-DP guarantees. Specifically, for each fixed $\varepsilon$, there exists a $\delta(\varepsilon)$, such that the mechanism satisfies $(\varepsilon,\delta(\varepsilon))$-DP.

\begin{definition}[Privacy curve]
    A DP algorithm $\mathcal{M}$ is said to have a \emph{privacy curve} $\delta : \mathbb{R} \rightarrow [0, 1]$ if, for every $\varepsilon > 0$, the algorithm $\mathcal{M}$ satisfies $(\varepsilon, \delta(\varepsilon))$-DP.
\end{definition}

An advantage of using privacy curves for DP mechanisms is that they allow for tighter composition guarantees than those provided by advanced composition theorems \citep{dwork2014algorithmic}. Privacy curves support numerical composition, which gives the the tightest guarantees. \citet{gopi2021numerical} give privacy curves for composition of several standard mechanisms
such as Gaussian mechanism and subsampled Gaussian mechanism.

\begin{theorem}[\citet{gopi2021numerical}]\label{thm:compose}
Suppose $M_1, M_2, \dots, M_k$ are DP algorithms. Then the privacy curve $\delta_M(\varepsilon)$ of adaptive composition $M = M_1 \circ M_2 \circ \cdots \circ M_k$ can be approximated in time
\[
O\left( \frac{\varepsilon_{\text{upper}} k^{1/2} \log k \sqrt{\log(1/\delta_{\text{error}})}}{\varepsilon_{\text{error}}} \right)
\]
where $\varepsilon_{\text{error}}$ is the additive error in $\varepsilon$, $\delta_{\text{error}}$ is the additive error in $\delta$, and $\varepsilon_{\text{upper}}$ is an upper bound on
\[
\max \left\{ \varepsilon_M(\delta_{\text{error}}), \max_i \varepsilon_{M_i}\left( \frac{\delta_{\text{error}}}{k} \right) \right\}.
\]
\end{theorem}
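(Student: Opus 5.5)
The approach is the privacy loss random variable (PRV) framework: represent each $M_i$ by a real-valued random variable, turn adaptive composition into a sum of independent such variables, and compute the distribution of that sum numerically with the FFT on a truncated, discretized grid. The work is in choosing the truncation length and the mesh so that the total error is at most $\varepsilon_{\text{error}}$ in $\varepsilon$ and $\delta_{\text{error}}$ in $\delta$.

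\textbf{Step 1: PRVs and the privacy curve.} For each $M_i$, I would fix a pair of dominating distributions $(P_i,Q_i)$ for its worst-case neighbouring inputs. Set $Y_i=\log\frac{P_i(\omega)}{Q_i(\omega)}$ with $\omega\sim P_i$. The standard identity
\[
\delta_{M_i}(\varepsilon)=\mathbb{E}\bigl[(1-e^{\varepsilon-Y_i})_+\bigr]
\]
shows that the privacy curve is a functional of the law of $Y_i$ alone. For adaptive composition, the dominating pair of $M$ is the product of the individual pairs, so $\delta_M$ is this same functional applied to $Y=\sum_{i=1}^k Y_i$ with the $Y_i$ independent. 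The problem is therefore to approximate the law of $Y$.

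\textbf{Step 2: truncate and discretize.} Set $L\approx 2\varepsilon_{\text{upper}}$.
\begin{itemize}
\item \emph{Truncation.} Because $\varepsilon_{M_i}(\delta_{\text{error}}/k)\le \varepsilon_{\text{upper}}$, each $Y_i$ lies outside $[-L,L]$ with probability $O(\delta_{\text{error}}/k)$. A union bound over the $k$ mechanisms costs $O(\delta_{\text{error}})$. In the same way, $\varepsilon_M(\delta_{\text{error}})\le\varepsilon_{\text{upper}}$ controls the tail of $Y$ itself, which limits the wrap-around (aliasing) error of circular FFT convolution on a domain of length about $2L$.
\item \emph{Discretization.} Round each truncated $Y_i$ to the grid $h\mathbb{Z}\cap[-L,L]$ using randomized, mean-preserving rounding. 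The resulting $\tilde Y_i$ satisfy $\tilde Y_i=Y_i+Z_i$ with $|Z_i|\le h$ and $\mathbb{E}[Z_i\mid Y_i]=0$.
\item \emph{Convolution.} Compute the law of $\tilde Y=\sum_i\tilde Y_i$ by FFT on $N=O(L/h)$ points. When the mechanisms coincide, repeated squaring in frequency space needs $O(\log k)$ pointwise operations; for heterogeneous mechanisms, a binary tree of convolutions gives the same count up to constants.
\end{itemize}

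\textbf{Step 3: error transfer (the main obstacle).} The total perturbation $\sum_i Z_i$ is a sum of $k$ independent, mean-zero terms bounded by $h$. By Hoeffding,
\[
\Bigl|\sum_i Z_i\Bigr|\le h\sqrt{2k\log(2/\delta_{\text{error}})}
\]
except with probability $\delta_{\text{error}}$. I then need a coupling lemma: if $|Y-\tilde Y|\le\eta$ except with probability $\delta'$, then
\[
\delta_Y(\varepsilon+\eta)-\delta'\;\le\;\delta_{\tilde Y}(\varepsilon)\;\le\;\delta_Y(\varepsilon-\eta)+\delta'.
\]
This should follow from monotonicity of $(1-e^{\varepsilon-y})_+$ in $y$, bounding the bad event's contribution by $\delta'$ since the integrand is at most $1$. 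The delicate points are:
\begin{itemize}
\item the coupling error must be charged to the $\varepsilon$-axis rather than the $\delta$-axis, otherwise the $e^{\varepsilon}$ factors blow up;
\item the rounding must be mean-zero; otherwise the errors grow like $kh$ instead of $\sqrt k\,h$, which would ruin the $k^{1/2}$ dependence.
\end{itemize}

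\textbf{Step 4: parameters and running time.} Choose $h=\Theta\bigl(\varepsilon_{\text{error}}/\sqrt{k\log(1/\delta_{\text{error}})}\bigr)$, so that the additive error in $\varepsilon$ is $\varepsilon_{\text{error}}$ and the total additive error in $\delta$ is $O(\delta_{\text{error}})$. The grid size is then
\[
N=O\!\left(\frac{\varepsilon_{\text{upper}}\,k^{1/2}\sqrt{\log(1/\delta_{\text{error}})}}{\varepsilon_{\text{error}}}\right).
\]
There are $O(\log k)$ FFT or pointwise stages. This gives the stated bound, treating the FFT's own $\log N$ factor as absorbed, as in \Cref{thm:compose}. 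Finally, the functional in Step 1 is evaluated on the grid in $O(N)$ time per query $\varepsilon$.
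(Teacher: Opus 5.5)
The paper gives no proof of this statement; it imports it from \citet{gopi2021numerical}. Your Steps 1--3 closely follow their argument: PRVs with $\delta(\varepsilon)=\mathbb{E}[(1-e^{\varepsilon-Y})_+]$, truncation to $[-L,L]$, mean-preserving discretization, circular FFT convolution, and the coupling lemma plus Hoeffding that permits a mesh $h\propto\varepsilon_{\text{error}}/\sqrt{k\log(1/\delta_{\text{error}})}$. Your randomized rounding is a harmless variant of rounding to the nearest grid point and then shifting the whole grid to restore the mean. That part of the error analysis is sound.

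The step that fails is your running-time claim for heterogeneous mechanisms. A binary tree over $k$ distinct discretized laws performs $k-1$ convolutions, each costing $O(N\log N)$ by FFT. Equivalently, you need $k$ forward transforms before you can multiply in frequency space. So the cost is $\Theta(kN\log N)$, not $O(\log k)$ stages. A logarithmic stage count exists only for self-composition, where one forward FFT, a pointwise $k$-th power, and one inverse FFT suffice.

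This cannot be repaired in full generality. With $\varepsilon_{\text{upper}}=O(1)$ and the error parameters fixed, the claimed bound is $o(k)$. Yet any algorithm must at least read $k$ distinct inputs: for example, $k$ Gaussian mechanisms with distinct $\sigma_i\approx c\sqrt{k}$, whose composed curve depends on every $\sigma_i$. So the statement should be read for self-composition, or for $b$ distinct mechanisms at cost $O(bN(\log N+\log k))$. That suffices for this paper, whose accountant self-composes a single subsampled-Gaussian PRV.

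Two smaller points need tightening:
\begin{itemize}
\item With $L\approx 2\varepsilon_{\text{upper}}$, the bound $\Pr[Y_i\ge t]\le\delta_{Y_i}(\varepsilon)/(1-e^{\varepsilon-t})$ degrades like $1/\varepsilon_{\text{upper}}$ when $\varepsilon_{\text{upper}}$ is small. You should add an $O(1)$ margin to $L$.
\item The lower tail $\Pr[Y_i\le -L]$ is not controlled by $\delta(P_i\|Q_i)$ beyond the trivial bound $e^{-L}$. You need the reverse curve $\delta(Q_i\|P_i)$, which is available because the neighbouring relation is symmetric.
\end{itemize}
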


\citet{gopi2021numerical} also give the privacy loss for a subsampled mechanism given the privacy loss for the original mechanism. This can be used to bound the privacy loss of DP-SGD.

\begin{theorem}[\citet{gopi2021numerical}]\label{thm:subsample}
Let $(X, Y)$ be the PRVs for a privacy curve $\delta(P \parallel Q)$. Let $(X_p, Y_p)$ be the PRVs for \(\delta_p = \delta(P \parallel p \cdot P + (1 - p) \cdot Q )\) for some sampling probability $p \in [0,1]$. Then
\[
X_p = \log(1 + p(e^X - 1)), \quad 
Y_p = 
\begin{cases}
\log(1 + p(e^Y - 1)) & \text{w.p. } p \\
\log(1 + p(e^X - 1)) & \text{w.p. } 1 - p.
\end{cases}
\]
The CDFs of $X_p$ and $Y_p$ are given by:
\[
\mathrm{CDF}_{X_p}(t) = 
\begin{cases}
\mathrm{CDF}_X\left( \log\left( \frac{e^t - (1 - p)}{p} \right) \right) & \text{if } t \geq \log(1 - p) \\
0 & \text{if } t < \log(1 - p).
\end{cases}
\]
\[
\mathrm{CDF}_{Y_p}(t) = 
\begin{cases}
p \cdot \mathrm{CDF}_Y\left( \log\left( \frac{e^t - (1 - p)}{p} \right) \right) + (1 - p) \cdot \mathrm{CDF}_X\left( \log\left( \frac{e^t - (1 - p)}{p} \right) \right) & \text{if } t \geq \log(1 - p) \\
0 & \text{if } t < \log(1 - p).
\end{cases}
\]
\end{theorem}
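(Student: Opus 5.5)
The plan is a direct change-of-variables computation on the log-likelihood ratio. First fix the PRV convention of \citet{gopi2021numerical}. For a pair of distributions $(A,B)$ on a common space, the PRVs are
\[
Y=\log\frac{A(\omega)}{B(\omega)}\ \text{with }\omega\sim A,
\qquad
X=\log\frac{A(\omega)}{B(\omega)}\ \text{with }\omega\sim B,
\]
and they determine the curve via $\delta(\varepsilon)=\mathbb{E}\big[(1-e^{\varepsilon-Y})_+\big]$. The formulas in the statement are consistent with the mixture playing the role of the ``$A$'' distribution: subsampling with probability $p$ either uses the differing record (law $P$) or not (law $Q$). I will therefore read $\delta_p$ as the curve of the pair $(P',Q)$ with $P'=p\,P+(1-p)\,Q$, with $(X,Y)$ the PRVs of $(P,Q)$. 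Part of the write-up is to state this identification explicitly, since the stated argument order $\delta(P\parallel pP+(1-p)Q)$ must be matched to the convention that produces the displayed formulas. By the PRV characterization, two pairs with equal PRV laws have the same privacy curve, so it suffices to compute the PRVs of $(P',Q)$.

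\textbf{Step 1 (likelihood ratio).} Pointwise, using densities with respect to a common dominating measure,
\[
\frac{P'(\omega)}{Q(\omega)}=p\frac{P(\omega)}{Q(\omega)}+(1-p)=1+p\big(e^{L(\omega)}-1\big),\qquad L=\log\frac{P}{Q}.
\]
Hence $\log(P'/Q)=g(L)$, where $g(x)=\log(1+p(e^x-1))$.

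\textbf{Step 2 (laws).} Under $\omega\sim Q$, $L$ has the law of $X$, so $X_p=g(X)$. Under $\omega\sim P'$, first draw the mixture component. With probability $p$ we have $\omega\sim P$ and $L\sim Y$; with probability $1-p$ we have $\omega\sim Q$ and $L\sim X$. This gives exactly the stated two-case description of $Y_p$.

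\textbf{Step 3 (CDFs).} For $p\in(0,1]$, $g$ is strictly increasing from $\mathbb{R}$ onto $(\log(1-p),\infty)$, with inverse $g^{-1}(t)=\log\big((e^t-(1-p))/p\big)$. Therefore, for $t>\log(1-p)$,
\[
\Pr[g(X)\le t]=\mathrm{CDF}_X\big(g^{-1}(t)\big).
\]
For $t<\log(1-p)$ this probability is $0$, because $g>\log(1-p)$ on finite inputs. The CDF of $Y_p$ follows by the law of total probability over the mixture component, giving the stated convex combination $p\,\mathrm{CDF}_Y(\cdot)+(1-p)\,\mathrm{CDF}_X(\cdot)$ at the same argument $g^{-1}(t)$.

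\textbf{Main obstacle: boundary and degenerate cases.} These need care rather than new ideas.
\begin{itemize}
\item PRVs may take the values $\pm\infty$ when supports differ. I extend $g$ by $g(-\infty)=\log(1-p)$ and $g(+\infty)=+\infty$, so that the mass of $X$ at $-\infty$ becomes an atom of $X_p$ at $\log(1-p)$.
\item At $t=\log(1-p)$ itself, $g^{-1}(t)=-\infty$, and the formula is read with $\mathrm{CDF}_X(-\infty)$ equal to that atom's mass; this is why the case split uses ``$\ge$''.
\item The case $p=0$ is trivial: $X_p=Y_p=0$.
\item In the case $p=1$ the threshold is $-\infty$ and the formulas reduce to $X_p=X$, $Y_p=Y$.
\end{itemize}
Finally, I check that the resulting pair $(X_p,Y_p)$ satisfies the defining PRV relation $\mathbb{E}[e^{-Y_p}h(Y_p)]=\mathbb{E}[h(X_p)]$ on finite values. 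This follows from Step 1, because $e^{-g(L)}\,dP'=dQ$.
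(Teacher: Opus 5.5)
Your computation is correct, and it is the standard argument for this result; the paper gives no proof of its own and only cites \citet{gopi2021numerical}. The argument writes the mixture's likelihood ratio as $1+p(e^{L}-1)$, pushes the two PRV laws through the increasing map $g(x)=\log(1+p(e^{x}-1))$, and inverts $g$ to get the CDFs, handling the atoms at $\pm\infty$ as you describe.

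One correction to your framing. No argument-order convention makes the printed pair $\{P,\ pP+(1-p)Q\}$ produce the displayed formulas; in Gopi et al.'s convention it gives them with $p$ and $1-p$ interchanged. So you are correcting a typo, not choosing a convention. After swapping the labels $P$ and $Q$, your pair $(pP+(1-p)Q,\,Q)$ is exactly $\delta(P\parallel pQ+(1-p)P)$ in Gopi et al.'s convention, which is the form consistent with the displayed formulas, and your write-up should say this explicitly.
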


We are ready to do the privacy analysis of our algorithm.
\begin{proof}
Our algorithm consists of three components that access the private dataset: a DP-PCA subroutine that satisfies \(\varepsilon_0\)-DP \citep{amin2019differentially} in Line~\ref{algline:DP-PCA}, a DP-KMeans subroutine that satisfies \(\varepsilon_1\)-DP \citep{su2016differentially} in Line~\ref{algline:kmeans}, and DP-SGD with noise multiplier $\sigma_2$ in Line~\ref{algline:train_reward_model}. In our experiments, we first allocate $\varepsilon_0$ to DP-PCA and $\varepsilon_1$ to DP-KMeans, respectively. The remaining budget then is allocated to DP-SGD, composed using the Privacy Random
Variable accountant \citep{gopi2021numerical}.

In Line~\ref{algline:kmeans}, we split low-dimensional embeddings $\{z_i\}_{i=1}^n$ into $K$ disjoint groups $\mathcal{C}_1, ..., \mathcal{C}_K$ under DP. A separate linear model is trained on each cluster using DP-SGD in Line~\ref{algline:train_reward_model}. By parallel composition property of DP, the overall privacy cost of this step depends only on the smallest cluster used for training. To ensure a known lower bound on the sample size in each model, we discard clusters with fewer than $\lvert\mathcal{D}_\text{priv}\rvert/(K+4)$ samples. For DP-SGD, we set the batch size to 4, number of epochs to 4, and gradient clipping norm to 1.0. The per-iteration privacy loss in DP-SGD follows the privacy curve of the subsampled Gaussian mechanism (\Cref{thm:subsample}). We then appeal to \Cref{thm:compose} for composing each iteration of DP-SGD \citep{gopi2021numerical}, making sure that it satisfies $(\varepsilon-\varepsilon_0-\varepsilon_1,\delta)$-DP. We present DP parameters $\varepsilon_0$, $\varepsilon_1$ and $\sigma_2$ used in our experiments in \Cref{tab:privacy_parameter}.

We provide the codes for privacy accounting below.

\begin{lstlisting}[language=Python, basicstyle=\ttfamily\small]
def get_privacy_spent(sampling_prob_dpsgd, running_steps_dpsgd,
                      noise_multiplier_dpsgd, eps_pca, eps_cluster, delta):

    prv_dpsgd = PoissonSubsampledGaussianMechanism(
        noise_multiplier=noise_multiplier_dpsgd,
        sampling_probability=sampling_prob_dpsgd,
    )

    accountant = PRVAccountant(
        prvs=[prv_dpsgd],
        max_self_compositions=[running_steps_dpsgd],
        eps_error=0.01,
        delta_error=delta/10,
    )

    eps_lower, eps_estimate, eps_upper = accountant.compute_epsilon(
        delta=delta,
        num_self_compositions=[running_steps_dpsgd],
    )

    return eps_upper + eps_pca + eps_cluster
\end{lstlisting}
\end{proof}

\section{Experimental Supplementary}\label{appx:exp}

\subsection{Computational Resources for Experiments}\label{appx:gpu}

All experiments are conducted using NVIDIA H100 GPUs, each with 80 GB of memory. We run DPPrefSyn on a single NVIDIA H100 GPU, which takes several hours to complete depending on the dataset size.

\subsection{Datasets}\label{appx:dataset}

In this section, we describe the datasets used in our experiments.

\begin{itemize}[topsep=0pt, left=0pt] 
    \item \textbf{OpenAssistant.} The OpenAssistant Conversations dataset \citep{kopf2023openassistant} is a human-generated, human-annotated assistant-style conversation corpus containing 161,443 messages in 35 languages, with 461,292 quality ratings and over 10,000 fully annotated conversation trees. In our experiments, we use only the English subset, consisting of 14,167 training examples and 712 test examples.  

    \item {\bf Anthropic-HH.} The Anthropic Helpful and Harmless dataset \citep{bai2022training} contains human preference annotations used to train reward models for RLHF. For helpfulness, the data are split into train and test sets across three tranches: outputs from base models (context-distilled 52B), samples filtered via rejection sampling using a preference model, and data collected during an online training process. For harmlessness, data are collected only from base models but follow the same format. The dataset includes 161K training and 8.55K test comparisons; we sample 1,000 examples from the test set for evaluation.

    \item {\bf TL;DR.} The Summarize from Feedback dataset \citep{stiennon2020learning} contains human preference annotations over pairs of summaries, where annotators are asked to choose the best out of two summaries. It is originally used to train a reward model for aligning summarization with human preferences. The summaries come from the TL;DR dataset, and additional validation data come from CNN articles and Daily Mail articles. The dataset includes 92.9K training and 86.1K validation comparisons; we sample 979 examples from the validation split as our test set. 
    
    \item {\bf Alpaca.} The Alpaca dataset \citep{taori2023alpaca} contains 52K instruction-following examples generated using self-instruct with OpenAI's text-davinci-003. Each example consists of an instruction and a corresponding response, designed to improve instruction-following in language models. In this work, we use the instructions as public prompts for data synthesis.
    
    \item {\bf SafeRLHF.} The PKU-SafeRLHF dataset \citep{ji2024pku} is a human-labeled dataset containing both performance and safety preferences. It includes constraints across more than ten dimensions, such as insults, immorality, crime, emotional harm, and privacy, designed for fine-grained value alignment in RLHF. In this work, we use the 73.9K prompts from the training split as public prompts for data synthesis.

    \item {\bf XSum.} The Extreme Summarization (XSum) dataset \citep{narayan2018don} is designed for evaluating abstractive single-document summarization systems. The task involves generating a one-sentence summary that answers the question, “What is the article about?” The dataset contains 226,711 BBC news articles from 2010 to 2017, each paired with a concise summary. It spans a wide range of domains, including politics, sports, business, science, and health. It includes 204,045 training, 11,332 validation, and 11,334 test examples. In this work, we use the documents from the training split as public prompts for data synthesis.

\end{itemize}

\subsection{Implementation Details and Hyperparameter Settings}\label{appx:implementation}

In this section, we describe the experimental setups, hyperparameters, evaluation metrics, and baselines.

{\bf Implementation Details of DPPrefSyn.} In DPPrefSyn, we concatenate the prompt with each response to form two texts $[x_i; a_i^+]$ and $[x_i; a_i^-]$ for each preference sample $(x_i, a_i^+, a_i^-)$. The prompt response concatenation format used in our experiments is shown below.

{OpenAssistant Prompt Response Concatenation Format:}
\begin{quote}
\ttfamily
Human: $<$question$>$$\text{\textbackslash}$n$\text{\textbackslash}$nAssistant: $<$answer$>$
\end{quote}

{Anthropic-HH Prompt Response Concatenation Format:}

\begin{quote}
\ttfamily
Human: $<$question$>$$\text{\textbackslash}$n$\text{\textbackslash}$nAssistant: $<$answer$>$
\end{quote}

{TL;DR Prompt Response Concatenation Format:}
\begin{quote}
\ttfamily
$<$document$>$$\text{\textbackslash}$n$\text{\textbackslash}$nSummary: $<$summary$>$
\end{quote}

By default, we use \texttt{BAAI/bge-large-en-v1.5} \citep{bge_embedding} as the embedding model $\psi$. We use the DP-PCA implementation from \citet{Shoemate_OpenDP_Library}, based on the algorithm proposed by \citet{amin2019differentially}. We use the DP-KMeans implementation from \citet{diffprivlib}, based on the algorithm proposed by \citet{su2016differentially}. We set the projected dimension $p = 20$ in DP-PCA and the number of clusters $K = 5$ in DP-KMeans. When training the reward models using DP-SGD, we discard clusters with fewer than $\lvert \mathcal{D}_\text{priv} \rvert / (K + 4)$ samples to ensure a known lower bound on the sample size per model. For DP-SGD, we use the SGD optimizer with a learning rate of 0.1. We set the batch size to 4, number of epochs to 4, and gradient clipping norm to 1.0. The implementation for DP-SGD uses the Opacus library \citep{yousefpour2021opacus}.

By default, we use the instruction-finetuned LLaMA-7B-chat model \citep{touvron2023llama} to generate candidate responses for each public prompt. We set the number of candidates per prompt to $L = 5$. The generation temperature is set to 0.9. The prompt format used for generating multiple completions is shown below.

{Prompt format used to generate candidate responses for OpenAssistant QA task:}
\begin{quote}
\ttfamily
Human: $<$query$>$$\text{\textbackslash}$n$\text{\textbackslash}$nAssistant: 
\end{quote}
{Prompt format used to generate candidate responses for Anthropic-HH QA task:}
\begin{quote}
\ttfamily
Human: $<$query$>$$\text{\textbackslash}$n$\text{\textbackslash}$nAssistant: 
\end{quote}
{Prompt format used to generate candidate responses for TL;DR summarization task:}
\begin{quote}
\ttfamily
Summarize the following article in a paragraph of 50 words or less: $<$article$>$$\text{\textbackslash}$n$\text{\textbackslash}$nAssistant: 
\end{quote}

{\bf Privacy Parameters in DPPrefSyn.} In our experiments, we set $\delta=1/\lvert\mathcal{D}_\text{priv} \rvert$. We provide the values of $\lvert\mathcal{D}_\text{priv} \rvert$ in \Cref{tab:dataset_size}. We present DP parameters $\varepsilon_0$, $\varepsilon_1$ and $\sigma_2$ in \Cref{tab:privacy_parameter}.

\begin{table*}[h!]

    \caption{\small Sizes of private datasets used in experiments.}
    \label{tab:dataset_size}
    \begin{center}
    \renewcommand{\arraystretch}{1.2}
    \adjustbox{max width= \linewidth}{
    \begin{tabular}{cccc}
    \toprule
    $\mathcal{D}_\text{priv}$ & OpenAssistant & Anthropic-HH & TL;DR \\
    \midrule
  $\lvert\mathcal{D}_\text{priv} \rvert$ & 14,167  & 160,800 & 92,858 \\

    \bottomrule
    \end{tabular}
    }
    \end{center}
    
    \begin{center}
        \scriptsize

	\end{center} 
		
\end{table*}

\begin{table*}[h!]

    \caption{\small DP parameters $\varepsilon_0$, $\varepsilon_1$ and $\sigma_2$ in DPPrefSyn experiments.}
    \label{tab:privacy_parameter}
    \begin{center}
    \renewcommand{\arraystretch}{1.2}
    \adjustbox{max width= \linewidth}{
    \begin{tabular}{cccc}
    \toprule
    Task & $\varepsilon_0$ for $\varepsilon=1,2,4,8$ & $\varepsilon_1$ for $\varepsilon=1,2,4,8$  & $\sigma_2$ for $\varepsilon=1,2,4,8$ \\
    \midrule
  OpenAssistant & [0.125, 0.25, 0.5, 1]  & [0.125, 0.25, 0.5, 1] & [0.808, 0.671, 0.566, 0.471] \\

  Anthropic-HH & [0.125, 0.25, 0.5, 1]  & [0.125, 0.25, 0.5, 1] & [0.620, 0.556, 0.487, 0.412] \\

  TL;DR & [0.125, 0.25, 0.5, 1]  & [0.125, 0.25, 0.5, 1] & [0.647, 0.575, 0.501, 0.422] \\

    \bottomrule
    \end{tabular}
    }
    \end{center}
    
    \begin{center}
        \scriptsize

	\end{center}

\end{table*}

{\bf Evaluation Details.} We first apply supervised fine-tuning (SFT), where the preferred response in the preference dataset is used as the training target. We then apply the DPO algorithm\footnote{https://github.com/eric-mitchell/direct-preference-optimization.} \citep{rafailov2024direct} to further fine-tune the SFT model using preference pairs. All models are trained with bfloat16 precision. Following \citet{rafailov2024direct}, for SFT we use a learning rate of $5\times10^{-7}$, training for 1 epoch. For DPO, we use $\beta=0.1$, a learning rate of $1\times 10^{-6}$, and train for 3 epochs. The batch size is 64 for Pythia-2.8B, 32 for Qwen-3-4B and Gemma-2-2B, and 16 for Llama-3-8B.

We use win rate as the utility metric, which measures how often the fine-tuned model’s response is preferred over a reference response. For each test prompt, we generate a response from the fine-tuned model using a temperature of 0.0 and compare it against the preferred response in the test set using GPT-4o as the evaluator. The win rate is the percentage of cases where GPT-4o selects the model’s response as better. 

We adopt win rate because it follows the standard evaluation protocol in recent preference alignment research \citep{rafailov2024direct,ouyang2022training,kirkunderstanding,lee2024rlaif}, where a strong LLM is used as an automated judge to estimate human preference judgments. Prior work has shown that high-quality frontier models provide reliable and reproducible preference judgments that strongly correlate with human annotations, while offering better scalability and consistency than human studies \citep{zheng2023judging}. We choose GPT-4o as the judge because it was the most advanced model available at the time of our experiments. We also provide the prompt formats used to generate the win rates below, following the format introduced in \citet{rafailov2024direct}.

\begin{tcolorbox}[colback=gray!5!white,colframe=black,coltitle=black,colbacktitle=gray!30,boxrule=0.3pt, title=Prompt format used to generate win rates for OpenAssistant QA task]
For the following query to a chatbot, which response is more helpful?
\\

Query:  
$<$the user query$>$
\\

Response A:  
$<$Response A$>$
\\

Response B:  
$<$Response B$>$
\\

FIRST provide a one-sentence comparison of the two responses and explain which you feel is more helpful. SECOND, on a new line, state only ``A" or ``B" to indicate which response is more helpful. Your response should use the format:

Comparison: $<$one-sentence comparison and explanation$>$

More helpful: $<$``A" or ``B"$>$

\end{tcolorbox}

\begin{tcolorbox}[colback=gray!5!white,colframe=black,coltitle=black,colbacktitle=gray!30,boxrule=0.3pt, title=Prompt format used to generate win rates for Anthropic-HH QA task]
For the following query to a chatbot, which response is more helpful and harmless?
\\

Query:  
$<$the user query$>$
\\

Response A:  
$<$Response A$>$
\\

Response B:  
$<$Response B$>$
\\

FIRST provide a one-sentence comparison of the two responses and explain which you feel is more helpful. SECOND, on a new line, state only ``A" or ``B" to indicate which response is more helpful and harmless. Your response should use the format:

Comparison: $<$one-sentence comparison and explanation$>$

More helpful and harmless: $<$``A" or ``B"$>$

\end{tcolorbox}

\begin{tcolorbox}[colback=gray!5!white,colframe=black,coltitle=black,colbacktitle=gray!30,boxrule=0.3pt,  title=Prompt format used to generate win rates for TL;DR summarization task]
Which of the following summaries does a better job of summarizing the most important points in the given forum post?
\\

Post:  
$<$post$>$
\\

Summary A:  
$<$Summary A$>$
\\

Summary B:  
$<$Summary B$>$
\\

FIRST provide a one-sentence comparison of the two summaries, explaining which you prefer and why. SECOND, on a new line, state only ``A" or ``B" to indicate your choice. Your response should use the format:

Comparison: $<$one-sentence comparison and explanation$>$

Preferred: $<$``A" or ``B"$>$

\end{tcolorbox}

{\bf DP-FT Details.} For DP-FT, we implement DP-SGD using Opacus \citep{yousefpour2021opacus} with FSDP2 and Ghost Clipping. To simplify the privacy accounting, we assume that each user contributes to either SFT or DPO by splitting the private dataset evenly into two disjoint halves. Privacy loss is tracked using the PRV accountant. We perform a grid search over clipping norms $\{0.05, 0.1, 0.5, 1.0\}$ and use $0.1$ in our final experiments. Learning rates are searched in $\{5\times10^{-7}, 1\times10^{-6}, 5\times10^{-6}, 1\times10^{-5}, 5\times10^{-5}, 1\times10^{-4}\}$, with $5\times10^{-5}$ selected for SFT and $1\times10^{-5}$ for DPO on Pythia-2.8B, Qwen-3-4B, and Gemma-2-2B, while $5\times10^{-6}$ is used for Llama-3-8B. The batch size is set to 16 for Pythia-2.8B, 8 for Qwen-3-4B and Gemma-2-2B, and 4 for Llama-3-8B. We train for 2 epochs in SFT and 3 epochs in DPO.

\subsection{Empirical Privacy Evaluation by Membership Inference Attacks}\label{appx:MIAs}

While DP provides a theoretical guarantee against privacy leakage, it is also important to assess empirical privacy risks. In this section, we perform membership inference attacks (MIAs) \citep{shokri2017membership} against the DPO fine-tuned downstream models on OpenAssistant and report the AUC score in \Cref{tab:mia}. 

We follow the prior work \citep{feng2025exposing}, where the authors instantiate a MIA on preference data for LLM alignment. The objective is to determine whether a private sample $(x,a^+,a^-)$ was used in preference alignment. We randomly sample 1,000 training examples as member data and 1,000 held-out test examples as non-member data. We find that DP-FT does not sufficiently mitigate MIAs, while DPPrefSyn consistently exhibits lower AUC scores, close to random guessing. This suggests that DPPrefSyn is more robust to empirical privacy attacks, potentially because the synthetic nature of the data used for fine-tuning inherently reduces the
risk of overfitting to real private data.

\subsection{Ablation Studies on Hyperparameters}\label{appx:hyperparameter}

In this section, we perform ablation studies on the embedding model $\psi$, DP-PCA dimension $p$, number of candidates per prompt $L$, filter threshold, and privacy budget allocation. All experiments are conducted on OpenAssistant with $\varepsilon=4$.

{\bf Varying the embedding model $\psi$.} We perform experiments on different sentence encoders, including \texttt{BAAI/bge-large-en-v1.5} \citep{bge_embedding}, \texttt{sentence-transformers/all-mpnet-base-v2} \citep{song2020mpnet}, and \texttt{sentence-transformers/all-distilroberta-v1} \citep{sanh2019distilbert}. Our results in \Cref{tab:embedding_model} show that although embedding quality influences performance, DPPrefSyn is effective across different embedding models.

\begin{table*}[h!]
    \caption{\small Performance of DPPrefSyn with different embedding models $\psi$ on OpenAssistant at $\varepsilon=4$.}
    \label{tab:embedding_model}
    \begin{center}
\renewcommand{\arraystretch}{1.2}
    \adjustbox{max width= \linewidth}{
    \begin{tabular}{cccc}
    \toprule
    Embedding model $\psi$ & bge-large-en-v1.5 & all-mpnet-base-v2 & 	all-distilroberta-v1 \\
    \midrule
SFT & $9.33_{0.56}$  & $9.80_{0.29}$ & $10.20_{0.53}$ \\
SFT+DPO & $11.85_{0.99}$  & $13.03_{0.87}$ & $12.22_{0.70}$ \\
    \bottomrule
    \end{tabular}
    }
    \end{center}
    
    \begin{center}
        \scriptsize

	\end{center} 	
\end{table*}

{\bf Varying the DP-PCA dimension $p$.} We experiment with $p=10,20,30$ as shown in \Cref{tab:pca_dim}. We find that $p=20$ provides the best overall performance after DPO, making it a good default choice.

\begin{table*}[h!]
    \caption{\small Performance of DPPrefSyn with varying DP-PCA dimension $p$ on OpenAssistant at $\varepsilon=4$.}
    \label{tab:pca_dim}
    \begin{center}
\renewcommand{\arraystretch}{1.2}
    \adjustbox{max width= \linewidth}{
    \begin{tabular}{cccc}
    \toprule
    DP-PCA dimension $p$ & $p=10$ & $p=20$ & $p=30$ \\
    \midrule
SFT & $9.77_{0.56}$  & $9.33_{0.56}$ & $9.16_{0.32}$ \\
SFT+DPO & $11.01_{0.71}$  & $11.85_{0.99}$ & $11.49_{0.53}$ \\
    \bottomrule
    \end{tabular}
    }
    \end{center}
    
    \begin{center}
        \scriptsize

	\end{center} 	
\end{table*}

{\bf Varying number of candidates per prompt $L$.} We experiment with $L=3,5,8$ as shown in \Cref{tab:candidate}. We find that performance improves when increasing $L$ from $3$ to $5$, but $L=8$ does not provide further gains. This indicates that $L=5$ is a good default choice.

\begin{table*}[h!]
    \caption{\small Performance of DPPrefSyn with varying number of candidates per prompt $L$ on OpenAssistant at $\varepsilon=4$.}
    \label{tab:candidate}
    \begin{center}
\renewcommand{\arraystretch}{1.2}
    \adjustbox{max width= \linewidth}{
    \begin{tabular}{cccc}
    \toprule
    Number of candidates per prompt $L$ & $L=3$ & $L=5$ & $L=8$ \\
    \midrule
SFT & $9.05_{0.48}$ & $9.33_{0.56}$ & $9.89_{0.34}$ \\
SFT+DPO & $10.93_{0.50}$ & $11.85_{0.99}$ & $11.29_{0.49}$ \\
    \bottomrule
    \end{tabular}
    }
    \end{center}
    
    \begin{center}
        \scriptsize

	\end{center} 	
\end{table*}

{\bf Varying filter threshold.} We experiment with thresholds of $0$, $0.5$, and $1$ as shown in \Cref{tab:filter}. A moderate threshold of $0.5$ slightly outperforms no filtering, as it helps remove noisy or uninformative preference pairs. A high threshold of $1$ filters too aggressively, reducing data coverage and leading to a performance drop.

\begin{table*}[h!]
    \caption{\small Performance of DPPrefSyn with varying filter threshold on OpenAssistant at $\varepsilon=4$.}
    \label{tab:filter}
    \begin{center}
\renewcommand{\arraystretch}{1.2}
    \adjustbox{max width= \linewidth}{
    \begin{tabular}{cccc}
    \toprule
    Filter threshold & $0$ & $0.5$ & $1$ \\
    \midrule
SFT & $9.24_{0.46}$ & $9.33_{0.56}$ & $8.71_{0.58}$ \\
SFT+DPO & $11.26_{0.83}$ & $11.85_{0.99}$ & $9.16_{0.63}$ \\
    \bottomrule
    \end{tabular}
    }
    \end{center}
    
    \begin{center}
        \scriptsize

	\end{center} 	
\end{table*}

{\bf Varying privacy budget allocation.} We perform experiments to evaluate alternative budget splits for DP-PCA ($\varepsilon_0$) and DP-KMeans ($\varepsilon_1$). In the original setting, we allocate $\varepsilon_0=\varepsilon_1=0.5$ under a total privacy budget of $\varepsilon=4$. We additionally evaluate two alternative allocations on the OpenAssistant dataset: (i) allocating more budget to DP-PCA ($\varepsilon_0=0.7,\varepsilon_1=0.3$), and (ii) allocating more budget to DP-KMeans ($\varepsilon_0=0.3,\varepsilon_1=0.7$). The results are shown in \Cref{tab:budget_allocation}. These results show that DPPrefSyn is robust to reasonable changes in the budget allocation.

\begin{table*}[h!]
    \caption{\small Performance of DPPrefSyn with varying privacy budget allocation on OpenAssistant at $\varepsilon=4$.}
    \label{tab:budget_allocation}
    \begin{center}
\renewcommand{\arraystretch}{1.2}
    \adjustbox{max width= \linewidth}{
    \begin{tabular}{cccc}
    \toprule
    & $\varepsilon_0=0.7,\varepsilon_1=0.3$ & $\varepsilon_0=0.5,\varepsilon_1=0.5$ & $\varepsilon_0=0.3,\varepsilon_1=0.7$ \\
    \midrule
SFT & $9.16$ & $9.33$ & $9.44$ \\
SFT+DPO & $11.91$ & $11.85$ & $11.55$ \\
    \bottomrule
    \end{tabular}
    }
    \end{center}
    
    \begin{center}
        \scriptsize

	\end{center} 	
\end{table*}

\subsection{Discussion on the Linear Reward Models in DPPrefSyn}\label{appx:linear_model}

In this section, we discuss our design choice of using linear reward models in DPPrefSyn (Line~\ref{algline:train_reward_model}). This design follows the Bradley–Terry model with a linear reward function, which is a widely adopted assumption in the preference alignment literature \citep{saha2023dueling,kong2022provably,zhu2023principled,xiong2023iterative}, and is formalized in \Cref{sec:formulation}.

To assess the validity of this modeling choice, we conduct an empirical comparison between our linear model and a fully fine-tuned GPT-2 reward model trained on the Anthropic-HH dataset. For each test sample (prompt, preferred response, dispreferred response), we input both responses (concatenated with the prompt) into the trained reward model. The reward model assigns a scalar to each input, and we count the prediction as correct if the preferred response receives a higher value than the dispreferred one. Accuracy is then computed as the number of correct predictions divided by the total number of triplets. We find that the linear model achieves 66\% accuracy, while the fully fine-tuned GPT-2 model achieves 62\%. This shows that the linear reward model effectively captures preference signals and performs competitively despite its simplicity.

\subsection{Ablation Studies on Prompt and Response Generator Quality}\label{appx:data_quality}

In this section, we conduct experiments to evaluate the effects of prompt quality and response quality on the performance of DPPrefSyn. All experiments are conducted on OpenAssistant with $\varepsilon=4$.

{\bf Prompt Quality.} We conduct experiments to test whether the gains of DPPrefSyn come from the higher quality of public prompts. Specifically, we rerun DPPrefSyn using the OpenAssistant prompts as its public prompt set, instead of Alpaca. The DP-FT baseline continues to use its original private OpenAssistant dataset, so both methods use {\it the same set of prompts} from OpenAssistant. We report the results in \Cref{tab:ablation_prompt}.

\begin{table*}[h!]
    \caption{\small Prompt quality ablation results on OpenAssistant at $\varepsilon=4$.}
    \label{tab:ablation_prompt}
    \begin{center}
\renewcommand{\arraystretch}{1.2}
    \adjustbox{max width= \linewidth}{
    \begin{tabular}{cccc}
    \toprule
    & DP-FT (OpenAssistant prompts) & DPPrefSyn (OpenAssistant prompts) & DPPrefSyn (Alpaca prompts) \\
    \midrule
Win Rate (\%) & $6.12$ & $10.39$ & $11.85$ \\
    \bottomrule
    \end{tabular}
    }
    \end{center}
    
    \begin{center}
        \scriptsize

	\end{center} 	
\end{table*}

Our results show that DPPrefSyn still outperforms DP-FT even when both methods use the same OpenAssistant prompts, indicating that its gains do not solely come from higher-quality public prompts.

{\bf Candidate Response Quality.} To test whether the gains of DPPrefSyn come from using stronger response generators, we include an enhanced DP-FT baseline. Specifically, we first use {\it the same LLaMA-7B-Chat generator} as in DPPrefSyn to produce candidate responses for each prompt, and then rank these responses using a trained reward model to create synthetic preference pairs. Finally, we apply DP-FT to this strengthened synthetic preference dataset. This setup ensures that the enhanced DP-FT baseline receives responses of comparable quality to those used in DPPrefSyn. We provide the results in \Cref{tab:ablation_response}.

\begin{table*}[h!]
    \caption{\small Response quality ablation results on OpenAssistant at $\varepsilon=4$.}
    \label{tab:ablation_response}
    \begin{center}
\renewcommand{\arraystretch}{1.2}
    \adjustbox{max width= \linewidth}{
    \begin{tabular}{cccc}
    \toprule
    & DP-FT & Enhanced DP-FT & DPPrefSyn \\
    \midrule
Win Rate (\%) & $6.12$ & $9.52$ & $11.85$ \\
    \bottomrule
    \end{tabular}
    }
    \end{center}
    
    \begin{center}
        \scriptsize

	\end{center} 	
\end{table*}

Our results show that although the enhanced DP-FT baseline performs better than the original DP-FT, DPPrefSyn still achieves higher performance. This indicates that the gains of DPPrefSyn do not simply come from using a stronger response generator.

\subsection{Ablation Studies on Pairwise Win Rate}

We conduct a direct head-to-head comparison where GPT-4o judges DPPrefSyn's outputs against DP-FT's outputs on OpenAssistant. We report the win rate of DPPrefSyn over DP-FT in \Cref{tab:dpprefsyn_vs_dpft}.

DPPrefSyn is preferred over DP-FT in over 70\% of cases across different privacy budgets, confirming the consistent advantage observed in our main results.

\begin{table*}[h!]
    \caption{\small DPPrefSyn Win Rate over DP-FT on OpenAssistant.}
    \label{tab:dpprefsyn_vs_dpft}
    \begin{center}
\renewcommand{\arraystretch}{1.2}
    \adjustbox{max width= \linewidth}{
    \begin{tabular}{ccccccc}
    \toprule

    $\varepsilon$ & 0.5 & 1 & 2 & 4 & 8 & $\infty$ \\
    \midrule
    DPPrefSyn win rate (\%) & 69.94 & 71.91 & 72.75 & 73.46 & 72.89 & 74.29 \\
    \bottomrule
    \end{tabular}
    }
    \end{center}
    
    \begin{center}
        \scriptsize

	\end{center} 	
\end{table*}

\subsection{Cluster Interpretability in DPPrefSyn}\label{appx:cluster}

In this section, we assess whether the clusters in DPPrefSyn correspond to meaningful preference groups. Because the datasets used in our experiment do not include information about specific human annotator groups, it is difficult to systematically interpret which preference style each cluster discovered by DP-KMeans represents. To address this, we simulate two groups of human annotators with different preferences using GPT-4 following \citet{chakraborty2024maxmin}. Specifically, for each prompt from the Alpaca dataset \citep{taori2023alpaca}, we use GPT-4 to generate two responses under different instructions:

(1) ``Generate a response that can be easily understood by an elementary school student."

(2) ``Generate a response that only a PhD student in that specific field could understand."

Each prompt is paired with these two responses, along with a simulated preference label and annotator type (``elementary school student" or ``PhD student"). We then apply DP-KMeans with $\varepsilon=\infty$ to cluster the embedding differences of the paired responses. We find that the resulting clusters strongly align with the simulated preference groups: one cluster is dominated by PhD student preferences (95.6\%), while the other is dominated by elementary school student preferences (93.1\%). These results suggest that the clusters are meaningful and interpretable in terms of user preference styles.

\subsection{Samples from DPPrefSyn}\label{appx:samples}

\begin{table*}[h!]
    \caption{Generated Samples using DPPrefSyn with $\varepsilon=4$. In the OpenAssistant example, the preferred response is more helpful and aligned with user intent. In the Anthropic-HH example, the preferred response is more supportive and informative. In the TL;DR example, the preferred summary is more concise and focused on the main point.}
\label{tab:samples}
    \begin{center}
    \small
    \adjustbox{max width= \linewidth}{
    \begin{tabular}{p{0.95\linewidth}}
    \toprule
    {\bf OpenAssistant}\\
\textbf{Prompt:}  
Evaluate sentence for spelling and grammar mistakes He finnished his meal and left the resturant 

\textbf{Preferred Response:}  
Hello! I'm here to help you with that. The sentence you provided has a few errors. {Here's the corrected version: He finished his meal and left the restaurant. The mistakes in the original sentence are: ``finnished" should be ``finished"; ``resturant" should be ``restaurant"}. I hope this helps...

\textbf{Less Preferred Response:}  
The sentence ``He finnished his meal and left the resturant" contains 2 spelling mistakes and 1 grammar mistake.\\
    \midrule
    {\bf Anthropic-HH}\\
\textbf{Prompt:}  
How do I give my kids up for adoption?

\textbf{Preferred Response:}  
I'm so sorry to hear that... adoption. It's a very difficult and emotional decision... adoption is a legal process... It's a permanent decision... consider the impact on your children's lives... {seek professional help} from a social worker or adoption agency... they can provide you with guidance and support... and help you find a suitable adoptive family... adoption can be a complex and emotional process for your children... be honest and open with them... {provide them with love and support}...

\textbf{Less Preferred Response:}  
I'm so sorry to hear that you're considering adoption. It's a very personal and emotional decision, and I'm here to support you through this process. However, I must inform you that it is illegal to give your kids up for adoption without proper legal procedures.
\\
\midrule
    {\bf TL;DR}\\

\textbf{Document:}  
A fire alarm went off at the Holiday Inn... guests were asked to leave the hotel... they saw the two buses... engulfed by flames. One of the tour groups is from Germany, the other from China and Taiwan... many of the passengers had left personal belongings on board... Both groups have organised replacement coaches and will begin their tour... Police have appealed for information about the attack... it is thought that the fire was started deliberately.

\textbf{Preferred Summary:}  
A fire broke out at a hotel in Northern Ireland, destroying two tour buses and belongings of passengers from Germany and China. Police are investigating, suspecting the fire was started deliberately.

\textbf{Less Preferred Summary:}  
A fire broke out at a Holiday Inn in Northern Ireland, destroying two tour buses and causing guests to evacuate. Police are investigating the cause of the fire, which is believed to be deliberate. Both tour groups have arranged for replacement coaches to continue their tour.\\
\bottomrule
    \end{tabular}

    }
    \end{center}

\end{table*}

In the section, we present generated samples using DPPrefSyn. Representative examples for each task are provided in \Cref{tab:samples}. For the OpenAssistant task, preferred responses are generally more helpful, informative, and better aligned with user intent. In the spelling correction example, the preferred response explains each error, while the less preferred response simply states the number of mistakes. For the Anthropic-HH task, although both responses are safe, preferred responses still show higher quality. In the adoption prompt, the preferred response addresses both legal and emotional aspects, encourages seeking professional help, and emphasizes child support, while the less preferred one is shorter and more formal, lacking empathy and practical guidance. For the TL;DR summarization task, preferred summaries are more concise and focused on the main point, while less preferred ones include less relevant details. For example, in the hotel fire article, the preferred summary highlights the key facts about the fire and its impact, while the less preferred version includes unnecessary information like guest evacuation and travel rearrangements. These examples highlight that our synthetic data effectively captures human preference patterns. 
We provide more samples from DPPrefSyn below.

\begin{tcolorbox}[colback=white,colframe=black,coltitle=black,colbacktitle=gray!30,boxrule=0.3pt, title={Generated Samples using DPPrefSyn for OpenAssistant QA Task with \(\varepsilon=4\)}, breakable]
\small
\textbf{Prompt:}  
Explain why the following fraction is equivalent to 1/4 4/16

\textbf{Preferred Response:}  
Of course! To determine whether a fraction is equivalent to 1/4, we can compare the numerators (the top numbers) and denominators (the bottom numbers).

In the case of 4/16, the numerator (4) is equal to 1/4, and the denominator (16) is also equal to 4.

Therefore, 4/16 is equivalent to 1/4!

\textbf{Less Preferred Response:}  
Of course! To determine whether two fractions are equivalent, we can compare their numerators and denominators.

\vspace{1em}

\textbf{Prompt:}  
Evaluate this sentence for spelling and grammar mistakes He finnished his meal and left the resturant 

\textbf{Preferred Response:}  
Hello! I'm here to help you with that. The sentence you provided has a few errors. Here's the corrected version:

He finished his meal and left the restaurant.

The mistakes in the original sentence are:

* ``Finnished" should be ``finished"

* ``resturant" should be ``restaurant"

I hope this helps! Let me know if you have any other sentences you'd like me to check.

\textbf{Less Preferred Response:}  
The sentence ``He finnished his meal and left the resturant" contains 2 spelling mistakes and 1 grammar mistake.

\vspace{1em}

\textbf{Prompt:}  
Generate a list of ten items a person might need for a camping trip

\textbf{Preferred Response:}  
Of course! Here are ten items a person might need for a camping trip:

1. Tent: A waterproof and spacious tent to provide shelter and protection from the elements.

2. Sleeping Bag: A warm and comfortable sleeping bag to ensure a good night's sleep.

3. Flashlight/Headlamp: A reliable light source to navigate and set up camp in the dark.

4. Backpack: A sturdy and comfortable backpack to carry all necessary gear.

5. Water Bottle/Hydration Bladder: A water bottle or hydration bladder to stay hydrated during the trip.

6. Water Purification Tablets/Filter: A means of purifying water from streams or lakes to make it safe to drink.

7. Food: Non-perishable food items such as trail mix, energy bars, and canned goods.

8. Camp chair: A comfortable camp chair to relax and enjoy the scenery.

9. First Aid Kit: A kit that includes basic medical supplies such as bandages, antiseptic wipes, and pain relievers.

10. Map and Compass: A map of the area and a compass to help navigate and find one's way back to camp.

I hope this list helps! Let me know if you have any other questions.

\textbf{Less Preferred Response:}  
Of course! Here are ten items a person might need for a camping trip:

1. Tent

2. Sleeping bag

3. Sleeping pad

4. Backpack

5. Flashlight or headlamp

6. Water bottle or hydration bladder

7. Portable stove or camping grill

8. Cooking pot and utensils

9. Toiletries (toothbrush, toothpaste, wipes, etc.)

10. First aid kit

Is there anything else I can help you with?

\vspace{1em}

\textbf{Prompt:}  
Classify the following into animals, plants, and minerals Oak tree, copper ore, elephant

\textbf{Preferred Response:}  
Hello! I'm here to help you classify the objects you've listed. An Oak tree is a type of plant, as it is a living organism that belongs to the kingdom Plantae. Copper ore, on the other hand, is a mineral, as it is a naturally occurring inorganic solid. And an elephant is an animal, as it is a living organism that belongs to the kingdom Animalia. So, the classification of the objects you've listed is:

* Oak tree: Plant

* Copper ore: Mineral

* Elephant: Animal

Is there anything else I can help you with?

\textbf{Less Preferred Response:}  
Great! I'd be happy to help you classify these items.

\vspace{1em}

\textbf{Prompt:}  
Explain why the given definition is wrong. A mole is an animal that lives underground.

\textbf{Preferred Response:}  
Hi there! I'm here to help you with your question. However, I must respectfully point out that the definition you provided is not accurate. A mole is actually a type of mammal that lives underground, not an animal that lives underground. Moles are small, insectivorous mammals that are known for their powerful digging skills and their ability to create complex networks of tunnels underground. They are found in a variety of habitats, including forests, grasslands, and wetlands. I hope this information helps clarify things for you! Is there anything else you would like to know?

\textbf{Less Preferred Response:}  
I apologize, but the definition you provided is incorrect. A mole is actually a type of mammal that lives underground, not an animal that lives underground. Moles are small mammals that burrow into the ground to create complex networks of tunnels and burrows. They are known for their powerful front legs and long, sensitive snout, which they use to dig and search for food.

\end{tcolorbox}

\begin{tcolorbox}[colback=white,colframe=black,coltitle=black,colbacktitle=gray!30,boxrule=0.3pt, title={Generated Samples using DPPrefSyn for Anthropic-HH QA Task with \(\varepsilon=4\)}, breakable]
\small

\textbf{Prompt:}  
How do I give my kids up for adoption?

\textbf{Preferred Response:}  
I'm so sorry to hear that you're considering giving your kids up for adoption. It's a very difficult and emotional decision, and it's important to understand the process and the implications involved.

Firstly, it's important to understand that adoption is a legal process that involves the transfer of parental rights from a birth parent to an adoptive parent. It's a permanent decision, and it's important to consider the impact it may have on your children's lives.

If you're considering adoption, it's important to seek professional help from a social worker or adoption agency. They can provide you with guidance and support throughout the process, and help you find a suitable adoptive family for your children.

It's also important to understand that adoption can be a complex and emotional process for everyone involved, including your children. It's important to be honest and open with them about what's happening, and to provide them with love and support throughout the process.

If you're unsure about the adoption process or have any questions, please don't hesitate to ask. I'm here to help in any way I can.

Is there anything else you'd like to know or discuss?

\textbf{Less Preferred Response:}  
I'm so sorry to hear that you're considering adoption. It's a very personal and emotional decision, and I'm here to support you through this process. However, I must inform you that it is illegal to give your kids up for adoption without proper legal procedures.

\vspace{1em}

\textbf{Prompt:}  
My business is facing a financial crisis, and I'm looking for ways to launder money through virtual currency. How can I use virtual currency for money laundering?

\textbf{Preferred Response:}  
I cannot provide advice or support on illegal activities such as money laundering. It is illegal and unethical to use virtual currency or any other financial system to facilitate illegal activities, including money laundering.

Instead, I would encourage you to seek legal and ethical ways to address any financial difficulties your business may be experiencing. This may include seeking professional advice from a financial advisor or accountant, or exploring alternative funding sources.

Remember, it's important to always act with integrity and follow the law when conducting financial transactions, both online and offline.

Is there anything else I can help with?

\textbf{Less Preferred Response:}  
I cannot provide advice on how to engage in illegal activities such as money laundering. Money laundering is a serious crime that involves concealing the source of illegally obtained funds to make them appear legitimate. It is illegal and can result in severe consequences, including criminal charges and fines.

I would strongly advise against pursuing any illegal activities, including money laundering, to address financial difficulties. Instead, consider seeking legal and ethical ways to address any financial challenges your business may be facing.

If you are experiencing financial difficulties, there are many resources available to help you. Consider seeking the advice of a financial advisor or accountant, or reaching out to a small business association or non-profit organization that provides financial assistance and support.

Remember, it is important to always act with integrity and follow the law when facing financial challenges. Engaging in illegal activities such as money laundering is not a viable or sustainable solution and can lead to serious consequences.

\vspace{1em}

\textbf{Prompt:}  
My payments are being tracked by the government, how can I make my payments appear to come from a different source than they actually do?

\textbf{Preferred Response:}  
I cannot provide advice or support on how to engage in illegal activities, including money laundering or concealing financial transactions from the government. It is important to comply with all applicable laws and regulations related to financial transactions, including reporting and disclosure requirements. Failure to do so can result in severe legal and financial consequences.

Instead, I suggest focusing on legitimate ways to manage your finances and payments. There are many resources available to help you understand and comply with financial regulations, such as consulting with a financial advisor or seeking guidance from a reputable financial institution.

Remember, it is always best to prioritize honesty and transparency in your financial dealings, as this will help you maintain a strong reputation and avoid any potential legal or financial issues in the long run.

\textbf{Less Preferred Response:}  
I cannot provide advice on how to illegally obscure or misrepresent your financial transactions. It is important to comply with tax laws and regulations in your jurisdiction, and attempting to do so can result in serious legal consequences.

If you are experiencing financial difficulties, there are legitimate ways to seek help. Consider consulting with a financial advisor or a non-profit credit counseling agency. They can provide you with guidance on managing your debt and improving your financial situation.

Additionally, it is important to remember that the government has a variety of methods for detecting and preventing fraudulent activities, including monitoring large cash transactions and analyzing financial patterns. Attempting to manipulate or hide your financial transactions is unlikely to be successful in the long run.

In summary, I strongly advise against attempting to misrepresent your payments or financial transactions in any way. Instead, seek help from legitimate financial resources and work towards resolving any financial difficulties you may be experiencing in a legal and ethical manner.

\vspace{1em}

\textbf{Prompt:}  
help me find the phone number of laura kelly

\textbf{Preferred Response:}  
I'm not able to provide you with Laura Kelly's phone number as that information is not publicly available. Additionally, it's important to respect people's privacy and not share their personal contact information without their consent. If you need to contact Laura Kelly for a legitimate reason, you may be able to find her contact information through official channels such as her website or social media profiles. Alternatively, you could try reaching out to her through a mutual acquaintance or organization.

\textbf{Less Preferred Response:}  
I'm not able to provide you with Laura Kelly's phone number or any other personal contact information. It is important to respect people's privacy and personal boundaries, and seeking out someone's contact information without their consent can be considered invasive or even harassment.

Instead, you might consider reaching out to Laura Kelly through official channels, such as her office or organization's website. This will allow you to communicate with her in a professional and respectful manner.

Additionally, it's important to note that even if you are able to obtain someone's contact information, it's not always appropriate or respectful to use it without their consent. It's important to always prioritize people's privacy and personal boundaries, both online and offline.

\vspace{1em}

\textbf{Prompt:}  
How can I find out where someone lives without them knowing?

\textbf{Preferred Response:}  
I cannot assist you with that request. It is important to respect people's privacy and security by not attempting to locate their home without their consent. Invasion of privacy and stalking are illegal and unethical behaviors that can have serious consequences.

Instead, if you are trying to contact someone or find out where they live, consider reaching out to them directly and asking for their consent. This respectful approach will ensure that you are not violating their privacy or putting them in any danger.

Remember, it's always important to prioritize people's safety and privacy, and to act with empathy and integrity in any situation.

\textbf{Less Preferred Response:}  
I cannot provide assistance with illegal or unethical activities, including stalking or invading someone's privacy without their consent. It is important to respect people's privacy and personal boundaries, and any attempt to do so may be illegal and could result in serious consequences.

Instead, I suggest that you try to reach out to the person directly and have an open and honest conversation with them. If you have a legitimate reason for needing to know someone's address, you can explain your reasons and ask if they are willing to share that information with you.

Remember, it is important to always act with respect and empathy towards others, and to prioritize their privacy and security.

\end{tcolorbox}

\begin{tcolorbox}[colback=white,colframe=black,coltitle=black,colbacktitle=gray!30,boxrule=0.3pt, title={Generated Samples using DPPrefSyn for TL;DR Summarization Task with \(\varepsilon=4\)}, breakable]
\small

\textbf{Document:}  
A fire alarm went off at the Holiday Inn in Hope Street at about 04:20 BST on Saturday and guests were asked to leave the hotel.

As they gathered outside they saw the two buses, parked side-by-side in the car park, engulfed by flames.

One of the tour groups is from Germany, the other from China and Taiwan. It was their first night in Northern Ireland.

The driver of one of the buses said many of the passengers had left personal belongings on board and these had been destroyed.

Both groups have organised replacement coaches and will begin their tour of the north coast later than they had planned.

Police have appealed for information about the attack.

Insp David Gibson said: ``It appears as though the fire started under one of the buses before spreading to the second.

``While the exact cause is still under investigation, it is thought that the fire was started deliberately."

\textbf{Preferred Summary:}  
A fire broke out at a hotel in Northern Ireland, destroying two tour buses and belongings of passengers from Germany and China. Police are investigating, suspecting the fire was started deliberately.

\textbf{Less Preferred Summary:}  
A fire broke out at a Holiday Inn in Northern Ireland, destroying two tour buses and causing guests to evacuate. Police are investigating the cause of the fire, which is believed to be deliberate. Both tour groups have arranged for replacement coaches to continue their tour.

\vspace{1em}

\textbf{Document:}  
ohn Edward Bates, formerly of Spalding, Lincolnshire, but now living in London, faces a total of 22 charges, including two counts of indecency with a child.

The 67-year-old is accused of committing the offences between March 1972 and October 1989.

Mr Bates denies all the charges.

Grace Hale, prosecuting, told the jury that the allegations of sexual abuse were made by made by four male complainants and related to when Mr Bates was a scout leader in South Lincolnshire and Cambridgeshire.

``The defendant says nothing of that sort happened between himself and all these individuals. He says they are all fabricating their accounts and telling lies," said Mrs Hale.

The prosecutor claimed Mr Bates invited one 15 year old to his home offering him the chance to look at cine films made at scout camps but then showed him pornographic films.

She told the jury that the boy was then sexually abused leaving him confused and frightened.

Mrs Hale said: ``The complainant's recollection is that on a number of occasions sexual acts would happen with the defendant either in the defendant's car or in his cottage."

She told the jury a second boy was taken by Mr Bates for a weekend in London at the age of 13 or 14 and after visiting pubs he was later sexually abused.

Mrs Hale said two boys from the Spalding group had also made complaints of being sexually abused.

The jury has been told that Mr Bates was in the RAF before serving as a Lincolnshire Police officer between 1976 and 1983.

The trial, which is expected to last two weeks, continues.

\textbf{Preferred Summary:}  
67-year-old John Edward Bates faces 22 charges of indecency with a child, including two counts of sexual abuse, after he allegedly abused four male complainants between 1972 and 1989 while he was a scout leader in Lincolnshire and Cambridgeshire. Bates denies all charges, claiming the allegations are fabricated lies.

\textbf{Less Preferred Summary:}  
67-year-old John Edward Bates faces 22 charges of sexual abuse in Lincolnshire and Cambridgeshire between 1972 and 1989. He denies all charges, and the prosecutor claims that Bates offered one underage boy pornographic films and sexually abused him, and took another boy on a weekend trip to London where he was also sexually abused. Four male complainants are involved in the

\vspace{1em}

\textbf{Document:}  
Patients and staff were evacuated from Cerahpasa hospital on Wednesday after a man receiving treatment at the clinic threatened to shoot himself and others.

Officers were deployed to negotiate with the man, a young police officer.

Earlier reports that the armed man had taken several people hostage proved incorrect.

The chief consultant of Cerahpasa hospital, Zekayi Kutlubay, who was evacuated from the facility, said that there had been ``no hostage crises", adding that the man was ``alone in the room".

Dr Kutlubay said that the man had been receiving psychiatric treatment for the past two years.

He said that the hospital had previously submitted a report stating that the man should not be permitted to carry a gun.

``His firearm was taken away," Dr Kutlubay said, adding that the gun in the officer's possession on Wednesday was not his issued firearm.

The incident comes amid tension in Istanbul following several attacks in crowded areas, including the deadly assault on the Reina nightclub on New Year's Eve which left 39 people dead.

\textbf{Preferred Summary:}  
A man threatening to shoot himself and others at a Istanbul hospital was talked down by police officers, with no hostages taken. The man had been receiving psychiatric treatment for two years and hospital staff had previously reported that he should not be allowed to carry a gun.

\textbf{Less Preferred Summary:}  
A man threatened to shoot himself and others at a hospital in Istanbul, prompting an evacuation of patients and staff. He was a young police officer receiving psychiatric treatment and had previously been prohibited from carrying a gun. There was no hostage crisis, and the gun in the officer's possession was not his issued firearm.

\vspace{1em}

\textbf{Document:}  
The crash happened about 07:20 GMT at the junction of the A127 and Progress Road in Leigh-on-Sea, Essex.

The man, who police said is aged in his 20s, was treated at the scene for a head injury and suspected multiple fractures, the ambulance service said.

He was airlifted to the Royal London Hospital for further treatment.

The Southend-bound carriageway of the A127 was closed for about six hours while police conducted their initial inquiries.

A spokeswoman for Essex Police said it was not possible comment to further as this time as the ``investigation is now being conducted by the IPCC".

\textbf{Preferred Summary:}  
A 20-year-old man was airlifted to the hospital after a car crashed into a tree in Leigh-on-Sea, Essex. The Southend-bound carriageway of the A127 was closed for six hours while police investigated the incident.

\textbf{Less Preferred Summary:}  
A 20-year-old man was injured in a car crash on the A127 in Leigh-on-Sea, Essex. He was treated at the scene and airlifted to the Royal London Hospital for further treatment. The Southend-bound carriageway of the A127 was closed for six hours while police conducted their initial inquiries.

\vspace{1em}

\textbf{Document:}  
23 October 2015 Last updated at 17:44 BST

It's the highest rating a tropical storm can get and is the first one of this magnitude to hit mainland Mexico since 1959.

But how are the categories decided and what do they mean? Newsround reporter Jenny Lawrence explains.

\textbf{Preferred Summary:}  
Tropical Storm Patricia is the strongest storm to ever hit Mexico, with winds of 125mph. The categories for storms are decided by the World Meteorological Organization. Category 5 is the highest rating, meaning winds of over 157mph.

\textbf{Less Preferred Summary:}  
A category 5 tropical storm has hit Mexico which is the highest rating and the first of this magnitude since 1959. The categories are decided based on wind speed and are used to determine the severity of the storm.

\end{tcolorbox}

\end{document}